\documentclass[conference]{IEEEtran}

\usepackage{cite}
\usepackage{amsmath,amssymb,amsfonts}
\usepackage{algorithmic}
\usepackage{graphicx}
\usepackage{textcomp}
\usepackage{xcolor}

\usepackage{booktabs}
\usepackage{algorithm}
\usepackage{algorithmic}
\usepackage{amssymb}
\usepackage{amsmath}
\usepackage{amsthm}
\usepackage{subfigure}

\begin{document}

\title{Learning Topological Representation for Networks via Hierarchical Sampling \\
\thanks{Identify applicable funding agency here. If none, delete this.}
}

\author{\IEEEauthorblockN{Guoji~Fu, Chengbin~Hou, and Xin~Yao} \\

	\IEEEauthorblockA{Shenzhen Key Lab of Computational Intelligence \\
		University Key Laboratory of Evolving Intelligent Systems of Guangdong Province \\
		Department of Computer Science and Engineering \\
		Southern University of Science and Technology, Shenzhen, 518055, P. R. China}
	Email: fugj2017@student.sustc.edu.cn, chengbin.hou10@foxmail.com, xiny@sustc.edu.cn}
\maketitle

\begin{abstract}
The topological information is essential for studying the relationship between nodes in a network. Recently, \textbf{N}etwork \textbf{R}epresentation \textbf{L}earning (NRL), which projects a network into a low-dimensional vector space, has been shown their advantages in analyzing large-scale networks. However, most existing NRL methods are designed to preserve the local topology of a network, they fail to capture the global topology. To tackle this issue, we propose a new NRL framework, named HSRL, to help existing NRL methods capture both the local and global topological information of a network. Specifically, HSRL recursively compresses an input network into a series of smaller networks using a community-awareness compressing strategy. Then, an existing NRL method is used to learn node embeddings for each compressed network. Finally, the node embeddings of the input network are obtained by concatenating the node embeddings from all compressed networks. Empirical studies for link prediction on five real-world datasets demonstrate the advantages of HSRL over state-of-the-art methods.
\end{abstract}

\begin{IEEEkeywords}
Networks analysis, network topology, representation learning
\end{IEEEkeywords}

\section{Introduction}
The science of networks has been widely used to understand the behaviours of complex systems. These systems are typically described as networks, such as social networks in social media \cite{scott2017social}, bibliographic networks in academic field \cite{sun2011pathsim}, protein-protein interaction networks in biology \cite{theocharidis2009network}. Studying the relationship between entities in a complex system is an essential topic, which benefits a variety of applications \cite{goyal2018graph}. Just take a few examples, predicting potential new friendship between users in social networks \cite{yang2011like}, searching similar authors in bibliographic networks \cite{sun2011pathsim}, recommending new movies to users in movie user-movie interest networks \cite{diao2014jointly}. The topologies of these networks provide insight information on the relationship between nodes. We can find out strongly connected neighborhoods of a node by exploring the local topology in a network. Meanwhile, the global topology is another significant aspect for studying the relationship between communities. As shown in Fig.\ref{figure1}, such hierarchical topological information is helpful to learn the relationship between nodes in a network.

\begin{figure}
	\centering
	\includegraphics[scale=0.44]{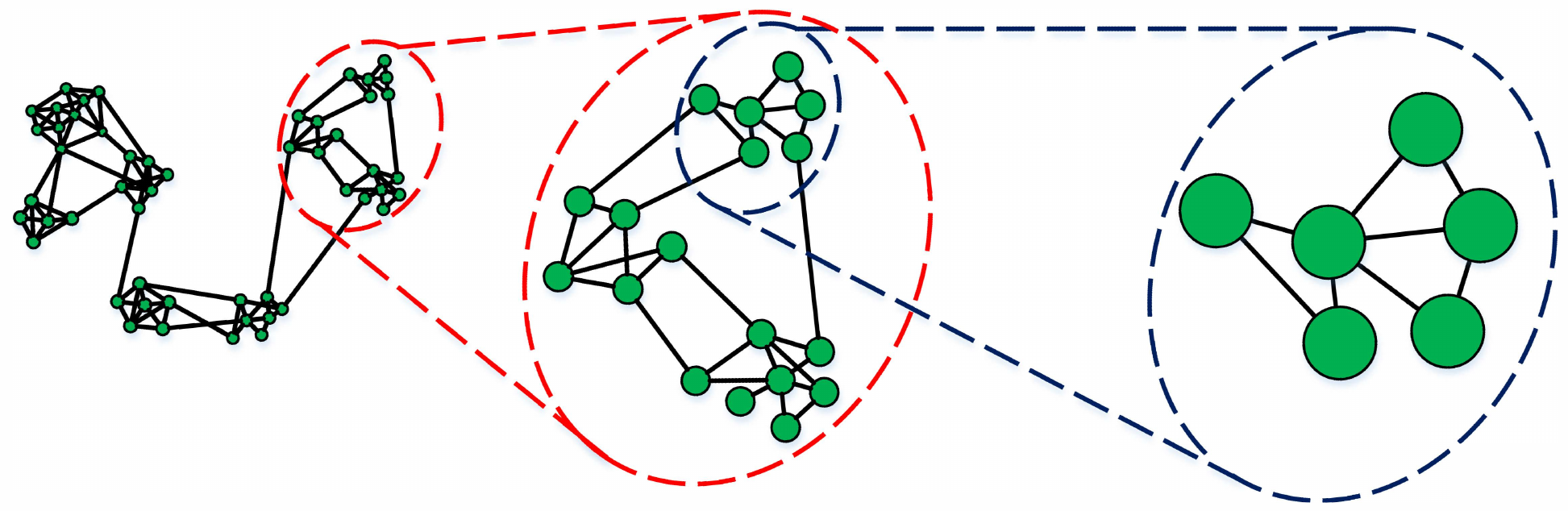}
	\caption{An example of hierarchical view of network topology.}
	\label{figure1}
\end{figure}

Many networks are large-scale in real-world scenarios, such as a Facebook social network contains billion of users \cite{ellison2007benefits}. As a result, most traditional network analytic methods suffer from high computation and space cost \cite{goyal2018graph}. To tackle this issue, \textbf{N}etwork \textbf{R}epresentation \textbf{L}earning (NRL) has been a popular technique to analyze large-scale networks recently. In particular, NRL aims to map a network into a low-dimensional vector space, while preserving as much of the original network topological information as possible. Nodes in a network are represented as low-dimensional vectors which are used as input features for downstream network analysis algorithms.

Traditional NRL methods such as LLE \cite{roweis2000nonlinear} and ISOMap \cite{tenenbaum2000global} work well on small networks, while they are infeasible to large-scale networks due to the high computational cost. Recently, some online learning methods, e.g., DeepWalk \cite{perozzi2014deepwalk}, node2vec \cite{grover2016node2vec}, and LINE \cite{tang2015line}, have been proposed to learn large-scale network representation, which has been demonstrated their efficiency and effectiveness for the large-scale network analysis.

However, the above NRL methods only consider the local topology of networks and fail to capture the global topological information. DeepWalk and node2vec firstly employ short random walks to explore the local neighborhoods of nodes and obtain node embeddings by the Skip-Gram model \cite{mikolov2013distributed}. LINE preserves the first-order and second-order proximities so that it can only measure the relationship between nodes at most two-hops away. These methods are efficient to capture the relationship between close nodes, however, fail to consider the case for nodes which are far away from each other. Recently, HARP \cite{chen2018harp} has been proposed to overcome this issue. It recursively compresses a network into a series of small networks based on two node collapsing schemes and learns node embeddings for each compressed network by using an existing NRL method. Unfortunately, the compressed networks may not reveal the global topology of an input network, since HARP heuristically merges two closed nodes into a new node. Furthermore, when learning node embeddings on the original network, using node embeddings obtained on compressed networks as the initialization solution may mislead the optimization process to a bad local minimum.

This paper presents a new NRL framework, called \textbf{H}ierarchical \textbf{S}ampling \textbf{R}epresentation \textbf{L}earning (HSRL), to learn node embeddings for a network with preserving both their local and global topological information. Specifically, HSRL uses a community-awareness network compressing strategy, called hierarchical sampling, to recursively compress an input network into a series of smaller networks, and then engage an existing NRL method to learn node embeddings for each compressed network. Finally, the node embeddings of the original network can be obtained by concatenating all node embeddings learned on compressed networks. Besides, we mathematically show that HSRL is able to capture the local and global topological relationship between nodes. Novel contributions of this paper include the following:
\begin{itemize}
	\item We propose a new NRL framework called HSRL, to learn node embeddings for a network, which is able to capture both local and global topological information of the network via a community-awareness network compressing strategy.
	
	\item We mathematically show that the node embeddings obtained by HSRL explicitly embed the local and global topological information of the input network.
	
	\item We demonstrate that HSRL statistically significantly outperforms DeepWalk, node2vec, LINE, and HARP on link prediction tasks on five real-world datasets.
\end{itemize}

\section{Related Work}
Most early methods in NRL field represent an input network in the form of a matrix, e.g., adjacency matrices \cite{roweis2000nonlinear, ahmed2013distributed}, Laplacian matrices \cite{belkin2002laplacian}, node transition probability matrices \cite{cao2015grarep}, and then factorize that matrix to obtain node embeddings. They are effective for small networks, but cannot scale to large-scale networks due to high computation cost.

To analyze large-scale networks, DeepWalk \cite{perozzi2014deepwalk} employs truncated random walks to obtain node sequences, and then learns node embeddings by feeding node sequences into Skip-Gram model \cite{ahmed2013distributed}. To generalize DeepWalk, node2vec \cite{grover2016node2vec} provides a trade-off between breadth-first search (BFS) and depth-first search (DFS) when generating truncated random walks for a network. LINE \cite{tang2015line} intends to preserve first-order and second-order proximities, respectively, by minimizing the Kullback-Leibler divergence of two joint probability distributions for each pair nodes. These methods are scalable to large-scale networks, but fail to capture the global topological information of networks. Because random walks are only effective to explore local neighborhoods for a node, and both first-order and second-order proximities defined by LINE just measure the relationship between nodes at most two-hops away.

To investigate global topologies of a network, HARP \cite{chen2018harp} recursively uses two collapsing schemes, edge collapsing and star collapsing, to compress an input network into a series of small networks. Starting from the smallest compressed network, it then recursively conducts a NRL method to learn node embeddings based on the node embeddings obtained from its previous level (if any) as the initialization. However, HARP has two weaknesses: 1) nodes that are connected but belong to different communities may be merged, which leads to that the compressed networks cannot well reveal the global topology of an input network. 2) taking the node embeddings learned on such compressed networks as initialization would mislead NRL methods to a bad local minimum. HARP could work well on node classification tasks since close nodes tend to have the same labels but may ineffective for the link prediction tasks. Because predicting the link between two nodes needs to consider both the local and global topological information of a network, such as neighborhoods they are sharing with and communities they are both involved in. This paper proposes HSRL to tackle the above issues of the existing NRL methods.

\section{Preliminary and Problem definition}
This section gives the notations and definitions throughout this paper.

We firstly introduce the definition of a network and related notations.
\newtheorem{mydef}{Definition}
\begin{mydef}
	(\textbf{Network}) \cite{goyal2018graph} A network (a.k.a. graph) is defined as $G = (V, E)$, where $V$ is a set of nodes and $E$ is a set of edges between nodes. The edge $e \in E$ between nodes $u$ and $v$ is represented as $e = (u, v)$  with a weight $w_{u,v} \geq 0$. Particularly, we have $(u, v) \equiv (v, u)$ and $w_{u,v} \equiv w_{v,u}$ if $G$ is undirected; $(u, v) \equiv (v, u)$ and $w_{u,v} \not\equiv w_{v,u}$, otherwise.
\end{mydef}

In most networks, some nodes are densely connected to form a community$/$cluster, while nodes in different communities are sparsely connected. Detecting communities in a network is beneficial to analyze the relationship between nodes. We employ modularity as defined below to evaluate the quality of community detection.

\begin{mydef}
    (\textbf{Modularity}) \cite{newman2006modularity, newman2004analysis} Modularity is a measure of the structure of networks, which measures the density of edges between the nodes within communities as compared to the edges between nodes in different communities. It is defined as below.
    \begin{equation}\label{equation1}
        Q = \frac{1}{2m}\sum_{i,j}\big(w_{i,j} - \frac{k_ik_j}{2m}\big)\delta(c_i, c_j)
    \end{equation}
    \noindent where $w_{i,j}$ is the weight of edge $e_{i,j}$ between nodes $v_i$ and $v_j$, $k_i = \sum_iw_{i,j}$, $m=\frac{1}{2}\sum_{i,j}w_{i,j}$, $c_i$ is the community which node $v_i$ belongs to, and \\
    \[ \delta(u, v) =
    \begin{cases}
    1, & \mbox{if u = v},\\
    0, & \mbox{otherwise}.
    \end{cases}
    \]
\end{mydef}

Networks with high modularity have dense connections between nodes within communities but sparse connections between nodes in different communities.

We give the definition of hierarchical sampling which is used to recursively compress a network into a series of smaller networks as follows.

\begin{mydef}
    (\textbf{Hierarchical Sampling}) Given a network $G = (V, E)$, hierarchical sampling compresses the original network level by level and obtains a series of compressed networks $G^0, G^1, ..., G^K$, which reveals the global topological information of original network at different levels, respectively. 
\end{mydef}

These compressed networks reveal the hierarchical topologies of the input network. Therefore, the node embeddings obtained in compressed networks embed the hierarchical topological information of the original network. 

To learn node embeddings of a network, NRL maps the original network into a low-dimensional space and represents each node as a low-dimensional vector as formulated below.
\begin{mydef}
	(\textbf{Network Representation Learning}) \cite{goyal2018graph} Given a network $G = (V, E)$, network representation learning aims to learn a mapping function $f: v \rightarrow z \in \mathbb{R}^d$ where $d \ll |V|$, and preserving as much of the original topological information in the embedding space $\mathbb{R}^d$.
\end{mydef}

Finally, we present the formulation of the hierarchical network representation learning problem as following:
\begin{mydef}
	(\textbf{Hierarchical Network Representation Learning}) Given a series of compressed networks $G^0, G^1, ..., G^K$ of original network $G = (V, E)$ and a network representation learning mapping function $f$, hierarchical network representation learning learns the node embeddings for each compressed network by $Z^k \leftarrow f(G^k), 0\leq k \leq K$, and finally obtains the node embeddings $Z$ of original network $G$ by concatenating $Z^0, Z^1, ..., Z^K$.
\end{mydef}

\section{HSRL}
In this section, we present Hierarchical Sampling Representation Learning framework which consists of two parts: 1) Hierarchical Sampling that aims to discover the hierarchical topological information of a network via a community-awareness compressing strategy; and 2) Representation Learning that aims to learn low-dimensional node embeddings while preserving the hierarchical topological information. 

\begin{figure}
    \centering
    \includegraphics[scale=0.4]{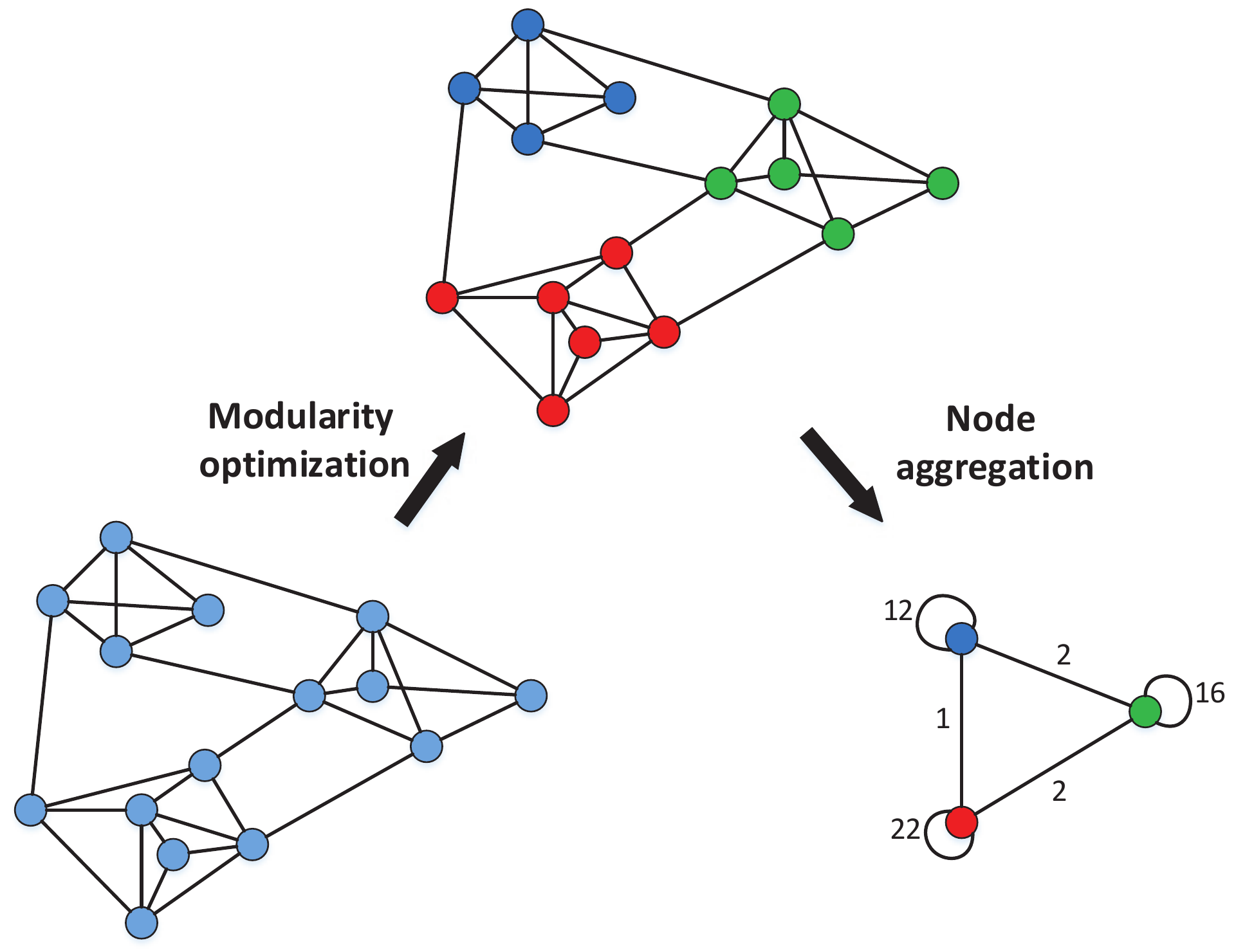}
    \caption{An exampling of compressing a network}
    \label{figure2}
\end{figure}

\begin{figure*}
\centering
	\includegraphics[scale=0.25]{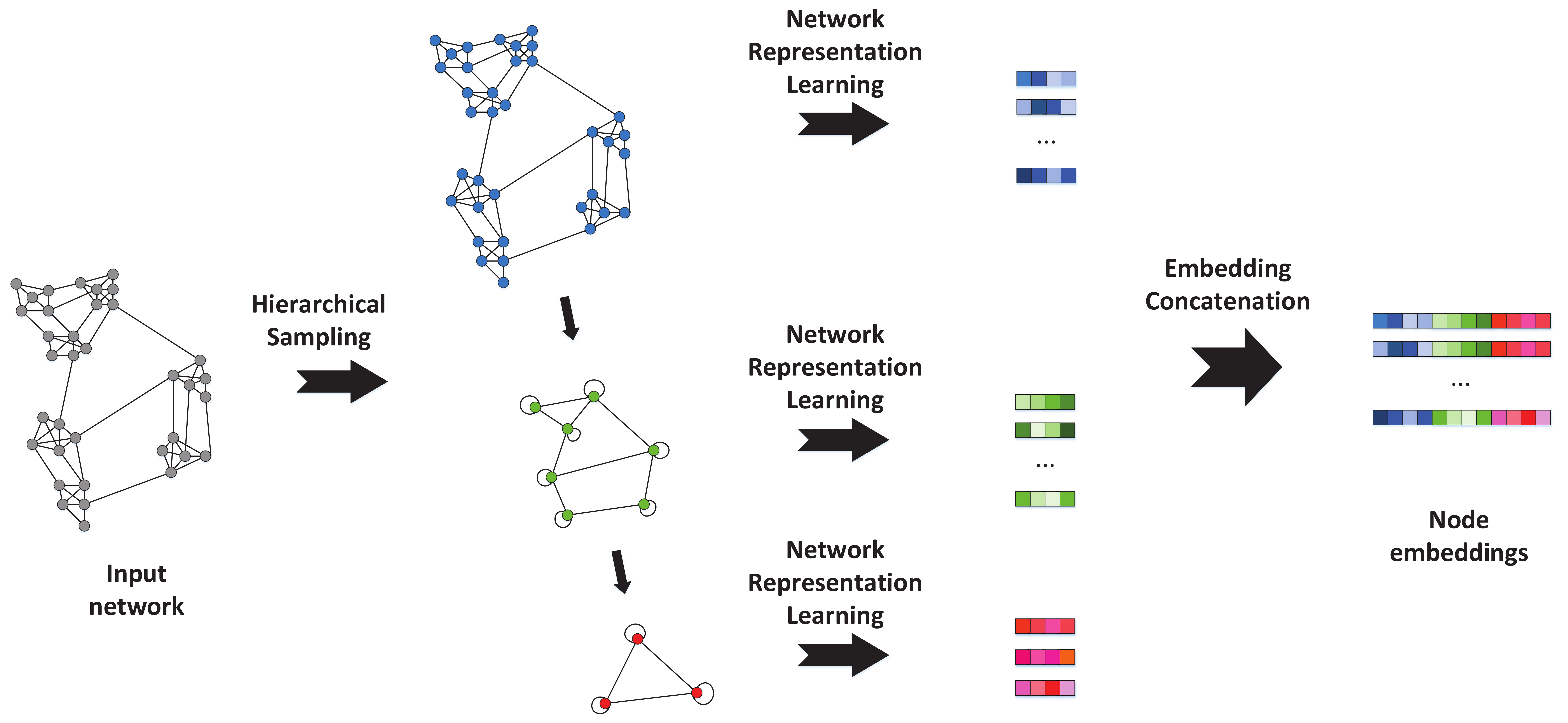}
	\caption{{\small The framework of HSRL.}}\label{figure3}
\end{figure*}

\subsection{Hierarchical Sampling}
Here we present the hierarchical sampling which is intended to compress a network into a series of compressed networks according to different compressing levels. Each compressed network reveals one of the hierarchical levels of global topology of the original network.

A community is one of the significant patterns of networks. Nodes in the same community are densely connected and nodes in different communities are sparsely connected. The relationship between nodes inside a community presents the local topological information of a network, while the relationship between communities reveals its global topology. It is worth noticing that in most large-scale networks, there are several natural organization levels - communities divide themselves into sub-communities - and thus communities with different hierarchical levels reveal the hierarchical topological information of original networks \cite{sales2007extracting, blondel2008fast}. Consequently, we compress a network into a new network based on communities by taking each community as a new node in the compressed network. Based on different hierarchical levels of communities, we can obtain a series of compressed networks which reveal the hierarchical global topological information of the input network.

The quality of the partitions obtained by community detection algorithms can be measured by the modularity of the partition \cite{clauset2004finding, blondel2008fast}. As a result, we can detect communities through optimizing the modularity of a network. As shown in Fig.\ref{figure2}, inspired by the Louvain method \cite{blondel2008fast}, hierarchical sampling compresses a network into a new network by implementing two phases: modularity optimization and node aggregation.

\textbf{Modularity optimization}. The first phase initializes each node in a network as a community and merges two connected nodes into one community if it can improve the modularity of the network. The implementation of community amalgamation will be repeated until a local maximum of the modularity is attained. 

\textbf{Node aggregation}. The second phase builds a new network whose nodes are the communities found in the previous phase. The weights of edges between new nodes are the sum of the weights of edges between nodes in the corresponding two communities. 

As shown in Algorithm \ref{algorithm1}, by recursively repeating the above two phases, hierarchical sampling obtains a series of compressed networks which reveal hierarchical global topology of the original network. 

\begin{algorithm}
	\renewcommand{\algorithmicrequire}{\textbf{Input:}}
	\renewcommand{\algorithmicensure}{\textbf{Output:}}
	\caption{Hierarchical Sampling}
	\label{algorithm1}
	\begin{algorithmic}[1]
		\REQUIRE network $G = (V, E)$, the largest \# compressed levels $K$
		\ENSURE a series of compressed networks $G^0$, $G^1$, ..., $G^K$
		\STATE $G^0 \leftarrow G$
		\FOR{$k \leq K$}
		\STATE $C^k \leftarrow ModularityOptimization(G^k)$
		\STATE $G^{k+1} \leftarrow NodeAggregation(C^k)$
		\STATE $k \leftarrow k + 1$
		\ENDFOR
		\STATE \textbf{return} $G^0$, $G^1$, ..., $G^K$
	\end{algorithmic}  
\end{algorithm}

\subsection{Representation Learning}
This section introduces representation learning on the compressed networks obtained by the previous section and concatenating the learned embeddings into node embeddings of the original network. We further provide a mathematical proof to demonstrate that HSRL embeds both local and global topological relationship of nodes in the original network into the learned embeddings.  

As shown in Fig.\ref{figure3}, we conduct representation learning on each compressed network. It is worth noticing that any NRL method can be used for this purpose. The embeddings of nodes in each compressed network are used to generate the final node embeddings of the original network. Particularly, the embedding $Z_i$ of node $v_i$ in the original network $G$ is the concatenation of the embeddings of hierarchical communities it involved in, as shown below.

\begin{equation}\label{equation2}
    Z_i = [Z_{c_i^0}^0, Z_{c_i^1}^1, ..., Z_{c_i^K}^K],
\end{equation}
\noindent where $c_i^k$ is the $k$-$th$ hierarchical community $v_i$ belongs to.

The node embeddings learned by the above representation learning process hold the following two Lemmas.

\newtheorem{myprop}{Lemma}
\begin{myprop}\label{lemma1}
    Nodes within the same hierarchical communities will get similar embeddings. The more the same hierarchical communities in which nodes involved, more similar embeddings they have.
\end{myprop}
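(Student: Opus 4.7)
The plan is to exploit the concatenation structure in Equation (\ref{equation2}) together with the agglomerative nature of Algorithm \ref{algorithm1}. First I would observe that hierarchical sampling is purely bottom-up: once two original nodes $v_i, v_j$ are absorbed into a single super-node at some level $k^*$, they are represented by one vertex in every higher-level compressed network, so $c_i^k = c_j^k$ for all $k \geq k^*$. I would establish this by induction on $k$, using the node-aggregation step of Algorithm \ref{algorithm1}, which replaces each level-$k$ community by a single vertex of $G^{k+1}$ and thus cannot separate what has already been merged.

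Given this stability, the main calculation is short. For every $k \geq k^*$ the two nodes share the same community representative, hence $Z_{c_i^k}^k = Z_{c_j^k}^k$. Substituting into
\begin{equation*}
Z_i = [Z_{c_i^0}^0, \ldots, Z_{c_i^K}^K], \qquad Z_j = [Z_{c_j^0}^0, \ldots, Z_{c_j^K}^K],
\end{equation*}
the coordinate blocks for levels $k^*, k^*+1, \ldots, K$ coincide, so any coordinate-wise similarity measure is determined solely by the first $k^*$ blocks. Concretely I would use the squared Euclidean distance and write
\begin{equation*}
\lVert Z_i - Z_j \rVert^2 = \sum_{k=0}^{k^*-1} \lVert Z_{c_i^k}^k - Z_{c_j^k}^k \rVert^2,
\end{equation*}
so that blocks from shared hierarchical levels contribute zero. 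The first half of the lemma then follows: any pair sitting in a common community has at least one vanishing block, which makes its distance strictly smaller than the trivial upper bound where no blocks agree.

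The monotonicity claim falls out immediately. If $(v_i, v_j)$ share communities from level $k^*$ onward while $(v_i, v_\ell)$ share them only from some later level $k^{**} > k^*$, the distance sum for $(i,j)$ runs over strictly fewer positive terms than that of $(i,\ell)$, so $Z_i$ is at least as close to $Z_j$ as to $Z_\ell$, strictly so when the differing blocks carry distinct community embeddings.

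The main obstacle is that \emph{similar} is used informally in the statement, so the plan depends on fixing a quantitative similarity notion; I would make this explicit by stating the result for squared Euclidean distance as above (cosine similarity on the shared blocks follows by the same decomposition). A secondary subtlety worth guarding against is the worry that two nodes might share a community at some level $k$ without sharing one at an earlier level; I would dispatch this by noting that Algorithm \ref{algorithm1} never splits a community, so the set of levels at which two nodes coincide is always a terminal segment $\{k^*,k^*+1,\ldots,K\}$, which is precisely what makes the phrase ``more shared hierarchical communities'' well defined.
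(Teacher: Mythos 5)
Your proposal is correct, and it is worth noting that the paper itself offers no real proof of this lemma: it simply asserts that the claim ``is easy to find'' from Eq.~(\ref{equation2}). What you supply goes genuinely beyond that. Your key added ingredient is the observation that the node-aggregation step of Algorithm~\ref{algorithm1} is purely agglomerative, so if $c_i^{k}=c_j^{k}$ then $c_i^{k+1}=c_j^{k+1}$, and hence the set of levels on which two nodes share a community is always a terminal segment $\{k^{*},\dots,K\}$. This is exactly what makes the phrase ``more of the same hierarchical communities'' well defined and what turns the informal statement into a provable monotonicity claim: the block decomposition $\lVert Z_i - Z_j\rVert^2=\sum_{k<k^{*}}\lVert Z_{c_i^k}^k - Z_{c_j^k}^k\rVert^2$ then has fewer (potentially) nonzero terms the earlier the pair merges. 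Your caveats are also the right ones to flag: ``similar'' must be pinned to a concrete metric (your squared-Euclidean version is clean, and the cosine version used in Lemma~\ref{lemma2} follows from the same block decomposition), and strict monotonicity between two pairs requires the non-shared blocks to actually differ. In short, the paper and you both read the lemma off the concatenation in Eq.~(\ref{equation2}), but you prove the structural fact that makes that reading legitimate, which the paper leaves implicit.
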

From Eq.\ref{equation2}, it is easy to find that the above lemma holds. Lemma \ref{lemma1} shows that HSRL preserves the relationship between densely connected nodes in the original network. Therefore, HSRL is capable to preserve the local topological information of a network.

\begin{myprop}\label{lemma2}
    The cosine similarity between embedding $Z_i$ and embedding $Z_j$ is proportional to the sum of similarities of the embeddings between their hierarchical communities.
    \begin{equation}\label{equation3}
        sim(Z_i, Z_j) \propto \sum_{k=0}^Ksim(Z_{c_i^k}^k, Z_{c_j^k}^k).
    \end{equation}
\end{myprop}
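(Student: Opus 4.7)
The plan is to exploit the block structure of the concatenated embedding $Z_i=[Z_{c_i^0}^0,\ldots,Z_{c_i^K}^K]$ given by Eq.~\ref{equation2} and reduce cosine similarity on the big vector to an aggregation of cosine similarities on its blocks. First I would expand the inner product $\langle Z_i, Z_j\rangle$: because concatenation is a direct sum of coordinates, the inner product splits cleanly into a sum over hierarchical levels, namely $\langle Z_i, Z_j\rangle = \sum_{k=0}^{K} \langle Z_{c_i^k}^k, Z_{c_j^k}^k\rangle$. This is the central algebraic identity that the rest of the argument hinges on.

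Next I would rewrite each per-level inner product in terms of the per-level cosine similarity, using $\langle Z_{c_i^k}^k, Z_{c_j^k}^k\rangle = \lVert Z_{c_i^k}^k\rVert\,\lVert Z_{c_j^k}^k\rVert\,sim(Z_{c_i^k}^k, Z_{c_j^k}^k)$. Dividing both sides of the inner-product identity by $\lVert Z_i\rVert\,\lVert Z_j\rVert$ then yields
\begin{equation*}
sim(Z_i, Z_j) \;=\; \sum_{k=0}^{K}\frac{\lVert Z_{c_i^k}^k\rVert\,\lVert Z_{c_j^k}^k\rVert}{\lVert Z_i\rVert\,\lVert Z_j\rVert}\,sim(Z_{c_i^k}^k, Z_{c_j^k}^k).
\end{equation*}
Since the norm at each level is bounded by the norm of the full concatenated vector (so the coefficients lie in $[0,1]$), this exhibits $sim(Z_i,Z_j)$ as a non-negative weighted combination of the per-level similarities, from which the proportionality asserted in Eq.~\ref{equation3} follows.

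The main obstacle is to justify the word \emph{proportional to}, because strictly speaking the coefficients in the sum depend on the norms of the level embeddings and are therefore not identical across $k$. I would handle this by invoking the customary assumption (adopted implicitly by DeepWalk/node2vec/LINE style embeddings) that embedding vectors are normalized or at least of comparable magnitude across levels; under that assumption all the coefficients collapse to a common factor $1/(\lVert Z_i\rVert\,\lVert Z_j\rVert)$ that is independent of $k$, and the identity above reduces exactly to Eq.~\ref{equation3}. Even without the normalization assumption, the weighted-sum form already shows that $sim(Z_i, Z_j)$ increases monotonically with each per-level similarity, which is the qualitative content the lemma is being used for later (to argue that HSRL captures global topological proximity by aggregating community-level similarities across all compression levels).
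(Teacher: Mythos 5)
Your proposal follows essentially the same route as the paper's own proof: expand the dot product of the concatenated vectors into a sum of per-level dot products and then pass to per-level cosine similarities. You are in fact more careful than the paper, which silently treats the step from $\sum_{k=0}^K Z_{c_i^k}^k \cdot Z_{c_j^k}^k$ to $\sum_{k=0}^K sim(Z_{c_i^k}^k, Z_{c_j^k}^k)$ as a proportionality; you correctly observe that the coefficients $\lVert Z_{c_i^k}^k\rVert\,\lVert Z_{c_j^k}^k\rVert/(\lVert Z_i\rVert\,\lVert Z_j\rVert)$ vary with $k$, and you make explicit the normalization (comparable per-level norms) assumption needed for the stated proportionality to hold exactly.
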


\begin{proof}
    \begin{equation}
    \begin{aligned}
    	sim(Z_i, Z_j) = {} & \frac{Z_i \cdot Z_j}{\big|Z_i\big|\big|Z_j\big|} \\
    	\propto {} & [Z_{c_i^0}^0, Z_{c_i^1}^1, ..., Z_{c_i^K}^K] \cdot [Z_{c_j^0}^0, Z_{c_j^1}^1, ..., Z_{c_j^K}^K] \\
    	= {} & \sum_{k=0}^KZ_{c_i^k}^k \cdot Z_{c_j^k}^k \\
    	\propto {} & \sum_{k=0}^Ksim(Z_{c_i^k}^k, Z_{c_j^k}^k). \notag
    \end{aligned}
    \end{equation}
\end{proof}

From Lemma \ref{lemma2}, we know that two nodes will obtain similar embeddings if they are involved in similar hierarchical communities no matter the distance between them in the original network. The relationship between communities in different hierarchies is embedded in the embeddings of their involved nodes. Hence, HSRL can preserve the hierarchical global topological information of a network.

Finally, HSRL is presented in Algorithm \ref{algorithm2}.
\begin{algorithm}
	\renewcommand{\algorithmicrequire}{\textbf{Input:}}
	\renewcommand{\algorithmicensure}{\textbf{Output:}}
	\caption{HSRL}
	\label{algorithm2}
	\begin{algorithmic}[1]
		\REQUIRE network $G = (V, E)$, compressing levels $K$, NRL mapping function $f$
		\ENSURE node embeddings $Z$
		\STATE $G^0, G^1, ..., G^K \leftarrow HierarchicalSampling(G, K)$
		\FOR{$k \leq K$}
		\STATE $Z^k \leftarrow f(G^k)$
		\STATE $k \leftarrow k + 1$
		\ENDFOR
		\STATE $Z \leftarrow Concatenating(Z^0, Z^1, ..., Z^K)$
		\STATE \textbf{return} $Z$
	\end{algorithmic}  
\end{algorithm}

\section{Experiments}
In this section, five real-world datasets are used to evaluate the performance of HSRL on link prediction task. The source code is available at {\itshape https://github.com/fuguoji/HSRL}.

\subsection{Datasets}
We evaluate our method on various real-world datasets, including Movielens\footnote{https://movielens.org/}, MIT\cite{traud2012social}, DBLP\cite{sun2011pathsim}, Douban\cite{zheng2017recommendation}, and Yelp\footnote{https://www.yelp.com}. These datasets are commonly used in NRL field. The detailed statistics of datasets are shown in Table \ref{table1} and the brief descriptions of each dataset are presented as below.

\begin{itemize}
    \item \textbf{Movielens}: Movielens is a user-movie interest network which contains three types of nodes: {\itshape users}, {\itshape movies}, and {\itshape terms}.
    
    \item \textbf{DBLP}: DBLP is a bibliographic network in computer science collected from four research areas: database, data mining, machine learning, and information retrieval. Nodes in the network including {\itshape authors}, {\itshape papers}, {\itshape venues}, and {\itshape terms}.
    
    \item \textbf{MIT}: MIT is a Facebook friendship network at one hundred American colleges and universities at a single point in time. It contains a single type of nodes, {\itshape users}.
    
    \item \textbf{Douban}: Douban is a user-movie interest network collected from a user review website Douban in China. The network contains four types of nodes including {\itshape users}, {\itshape movies}, {\itshape actors}, and {\itshape directors}.
    
    \item \textbf{Yelp}: Yelp is a user-business network collected from a website Yelp in America. It contains four types of nodes including {\itshape users}, {\itshape businesses}, {\itshape locations}, and {\itshape business categories}.
\end{itemize}

\begin{table}
	\centering
	\caption{{\small Statistics of five datasets.}}\label{table1}
	\begin{tabular}{cccc}
		\toprule
		\textbf{Datasets} & \# Nodes & \# Edges & Network Types \\
		\midrule
		Movielens & 1332 & 2592 & User-movie \\
		DBLP & 37791 & 170794 & Bibliography \\
		MIT & 6402 & 251230 & Friendship \\
		Douban & 13786 & 214392 & User-movie\\
		Yelp & 28759 & 247698 & User-business\\
		\bottomrule
	\end{tabular}
\end{table}

\subsection{Baselines}
We compare our method with four start-of-the-art algorithms, which are introduced below. 
\begin{itemize}
    \item \textbf{DeepWalk}: DeepWalk is a random walk based NRL method. It conducts random walks on each node to sample node sequences from a network and uses the Skip-Gram model to learn node embeddings by treating node sequences as sentences and nodes as words.
    
    \item \textbf{node2vec}: node2vec is a biased random walk based method that provides a trade-off between DFS and BFS when employing random walks on nodes. Then the Skip-Gram model is used to learn node embeddings based on the sampling node sequences.
    
    \item \textbf{LINE}: LINE defines the first-order and second-order proximities to measure the similarity between nodes, and learns node embeddings by preserving the aforementioned proximities of nodes in the embedding space.
    
    \item \textbf{HARP}: HARP recursively uses two collapsing schemes, edge collapsing and star collapsing, to compress an input network into a series of small networks. Starting from the smallest compressed network, it recursively conducts a NRL method to learns node embeddings in each network using node embeddings of the previous level network as initialization.
\end{itemize}

\subsection{Parameter Settings}
Here we discuss the parameter setting for our method and baselines:
\begin{itemize}
    \item \textbf{DeepWalk}, \textbf{HARP(DW)}, and \textbf{HSRL(DW)}. The number of random walks $t$, the length of each walk $l$, window size of Skip-Gram model $w$, representation size $d$, and learning rate $\eta$ for DeepWalk, HARP(DW), and HSRL(DW) are set as $t = 10$, $l = 40$, $w = 5$, $d = 64$, $\eta = 0.025$. The largest number of compressing levels for HSRL(DW) $K$ is set as 3.
    
    \item \textbf{node2vec}, \textbf{HARP(N2V)}, and \textbf{HSRL(N2V)}. The parameter setting for node2vec, HARP(N2V), and HSRL(N2V) is $t = 10$, $l = 40$, $w = 5$, $d = 64$, $\eta = 0.025$. The largest number of compressing levels for HSRL(N2V) $K$ is set as 3.
    
    \item \textbf{LINE}, \textbf{HARP(LINE)}, and \textbf{HSRL(LINE)}. The number of negative sampling $\lambda$, learning rate $\eta$, and representation size $d$ for LINE, HARP(LINE), and HSRL(LINE) are set as $\lambda = 5$, $\eta = 0.025$, $d = 64$. The largest number of compressing levels for HSRL(LINE) $K$ is set as 3.
\end{itemize}
\subsection{Hierarchical Sampling for Networks}

\begin{figure*}
	\centering
	\subfigure[{\small Movielens}]{
		\includegraphics[scale=0.3]{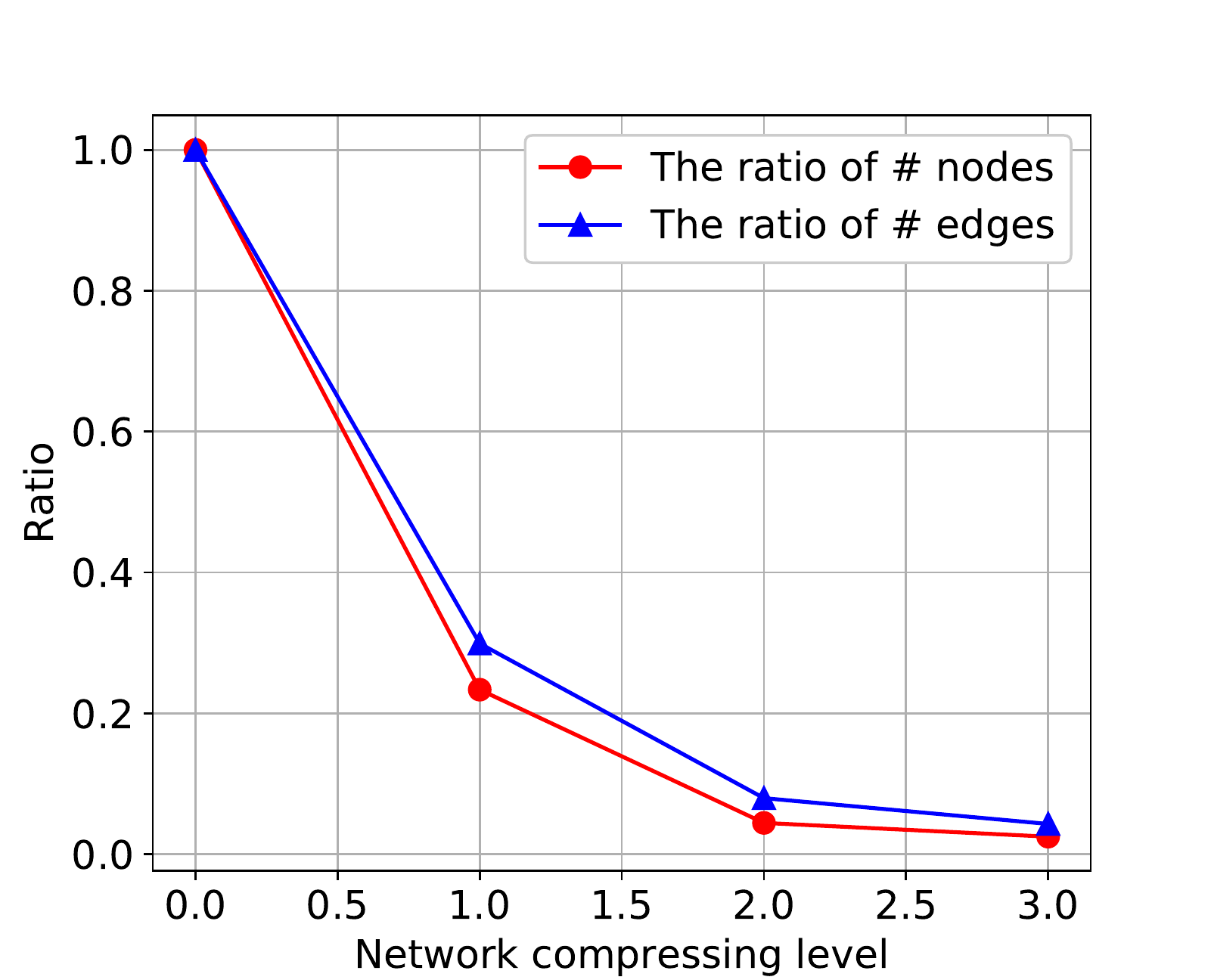}
	}
	\subfigure[{\small DBLP}]{
		\includegraphics[scale=0.3]{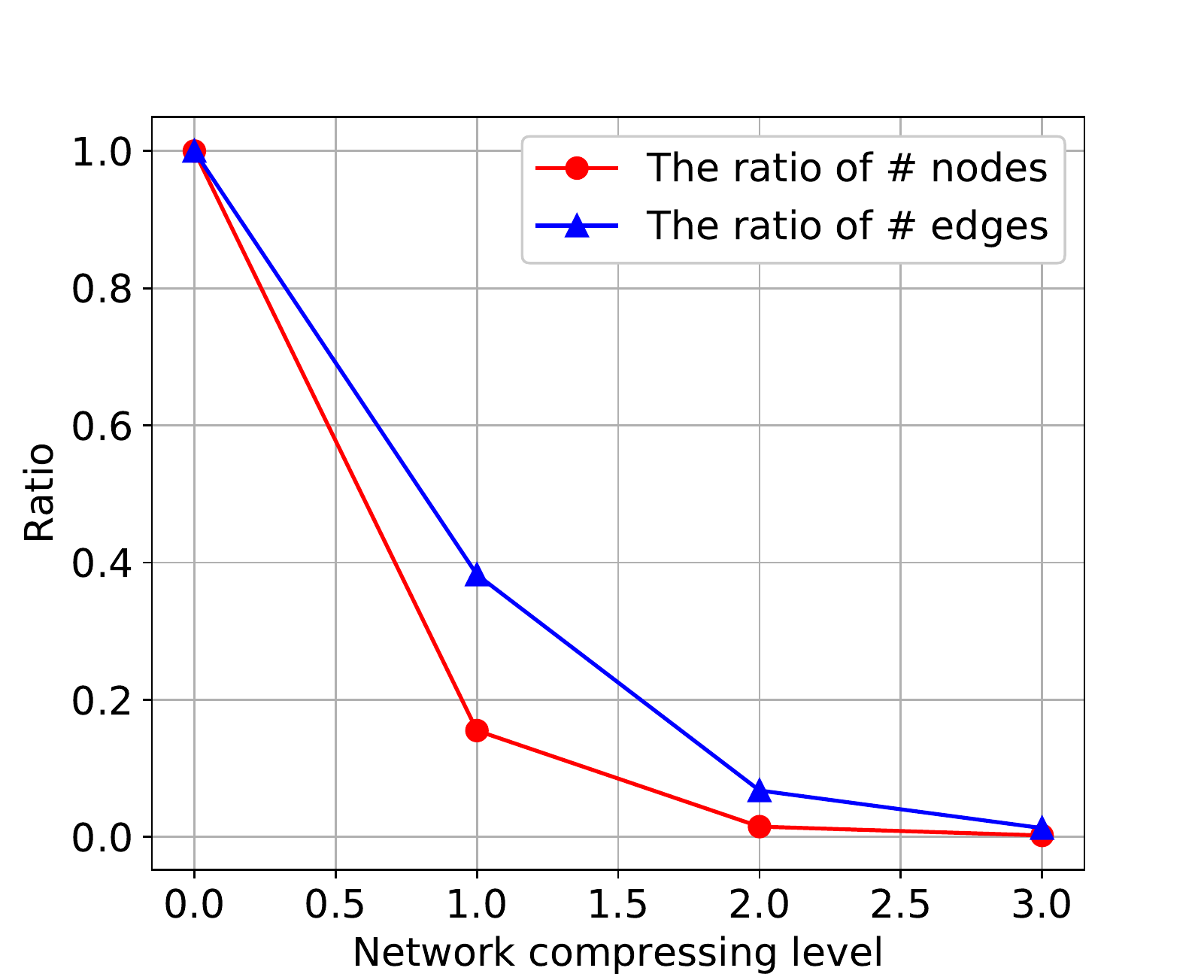}
	}
	\subfigure[{\small MIT}]{
		\includegraphics[scale=0.3]{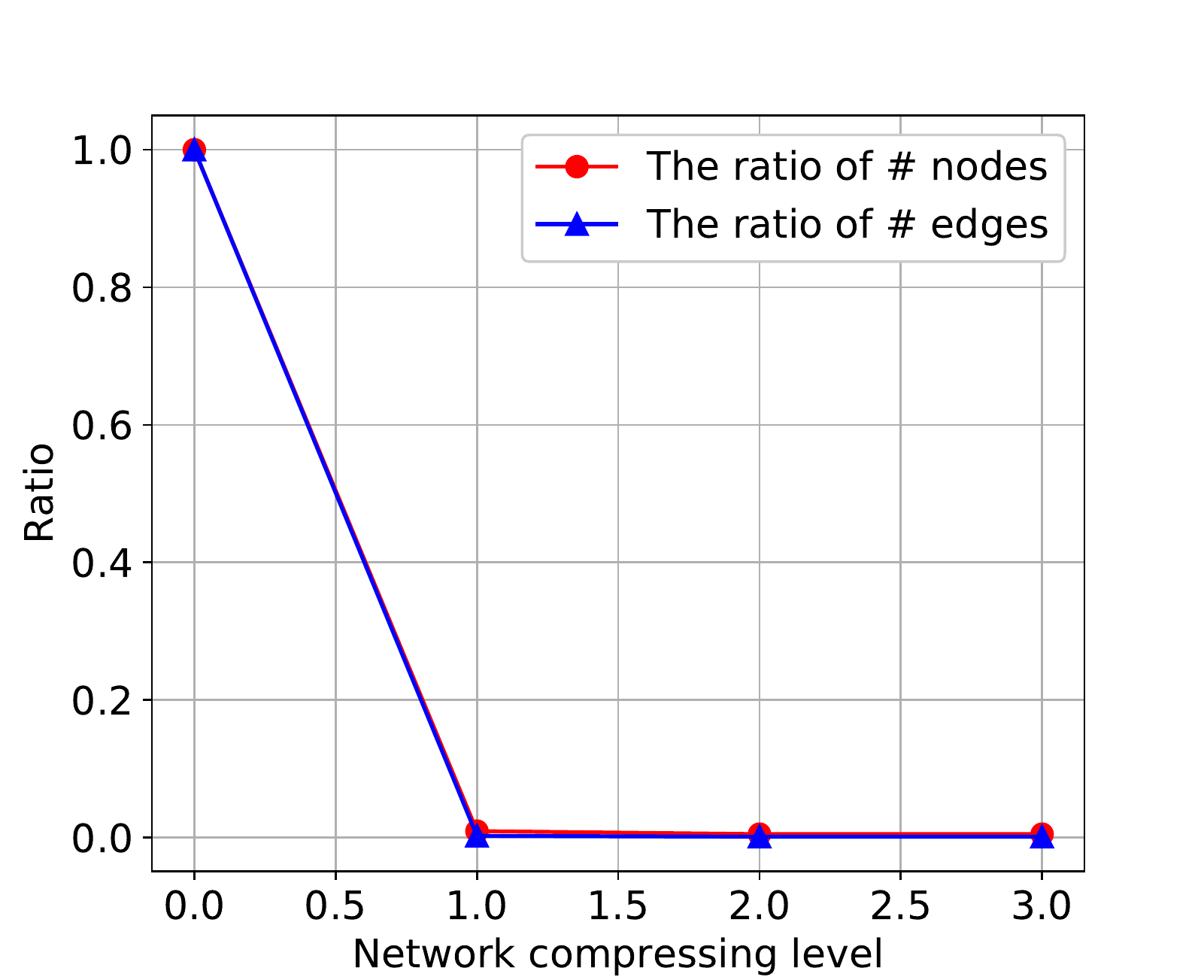}
	}
	\subfigure[{\small Douban}]{
		\includegraphics[scale=0.3]{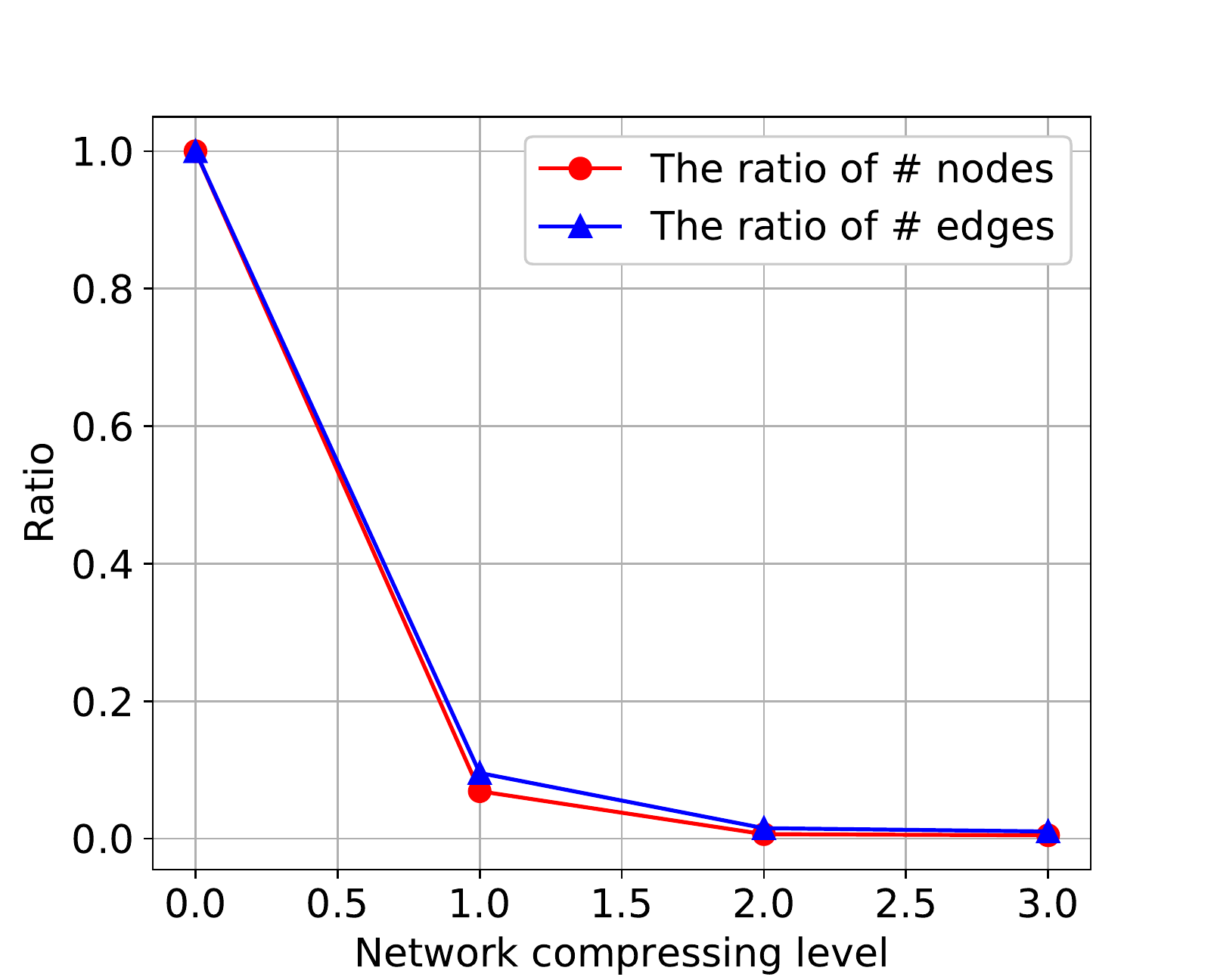}
	}
	\subfigure[{\small Yelp}]{
		\includegraphics[scale=0.3]{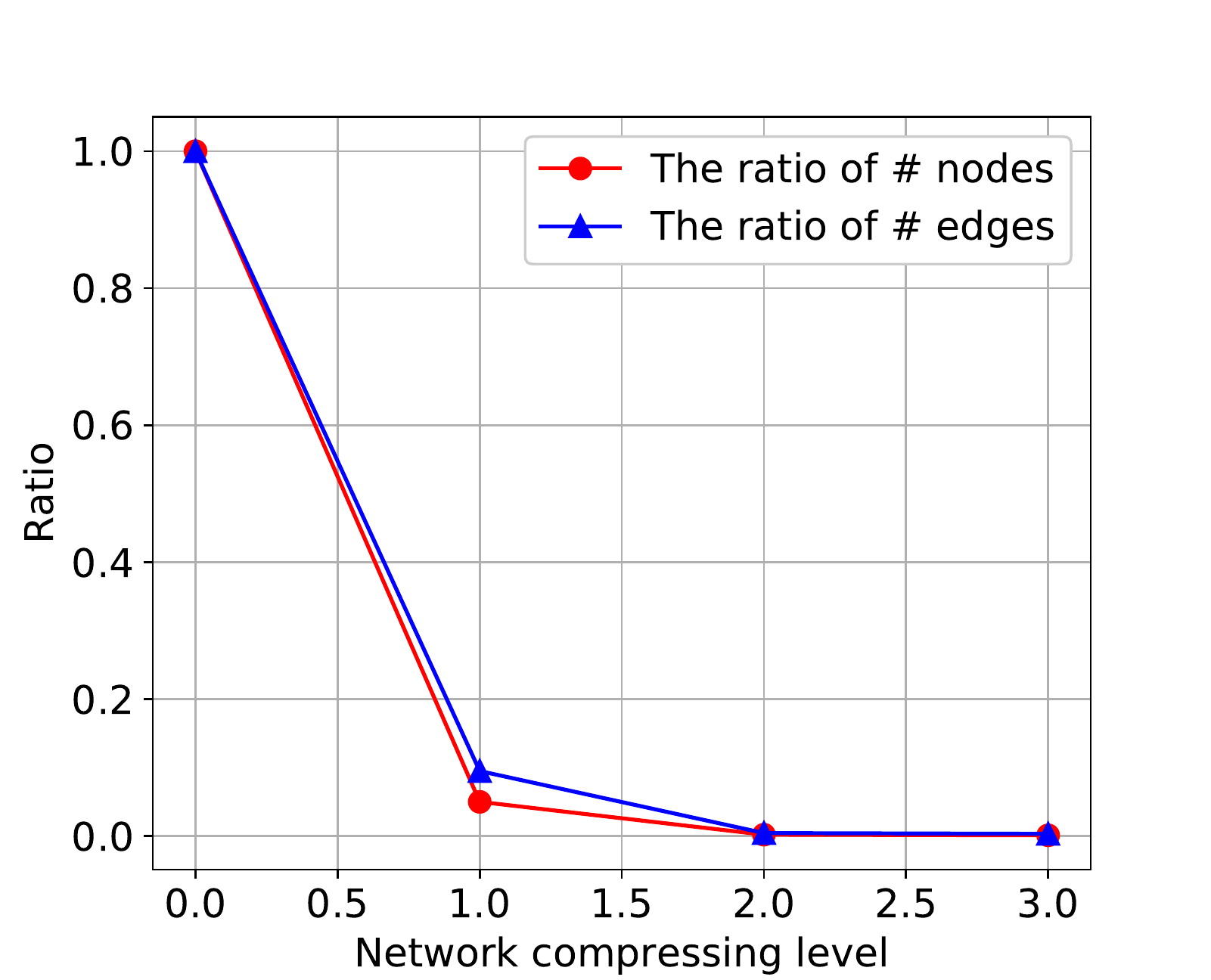}
	}
	\caption{{\small The compressing ratio of nodes$/$edges of compressed networks to the original network.}}\label{figure4}
\end{figure*}

We firstly discuss the results of hierarchical sampling on testing networks. Fig.\ref{figure4} presents the network compressing results by the hierarchical sampling on five datasets. As shown in Fig.\ref{figure4}, the number of nodes and edges of compressed networks drastically decrease as the compressing process continues and finally becomes stable when the compressing level is large than 3. Therefore, in the following link prediction tasks, we set the largest number of compressing level as 3.

\begin{figure}
	\centering
	\subfigure[{\small Input}]{
		\includegraphics[scale=0.08]{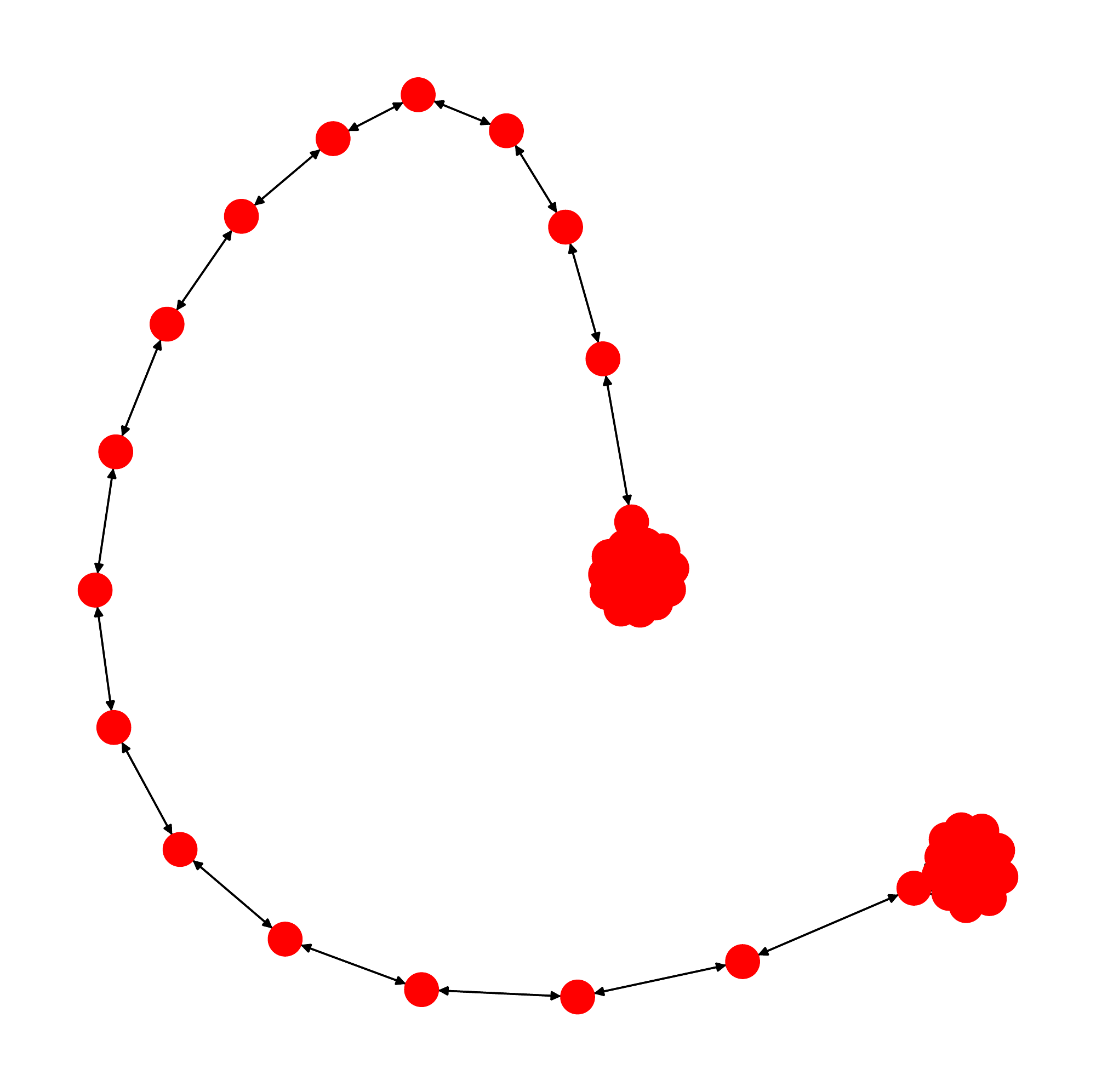}
	}
	\subfigure[{\small Level 1}]{
		\includegraphics[scale=0.08]{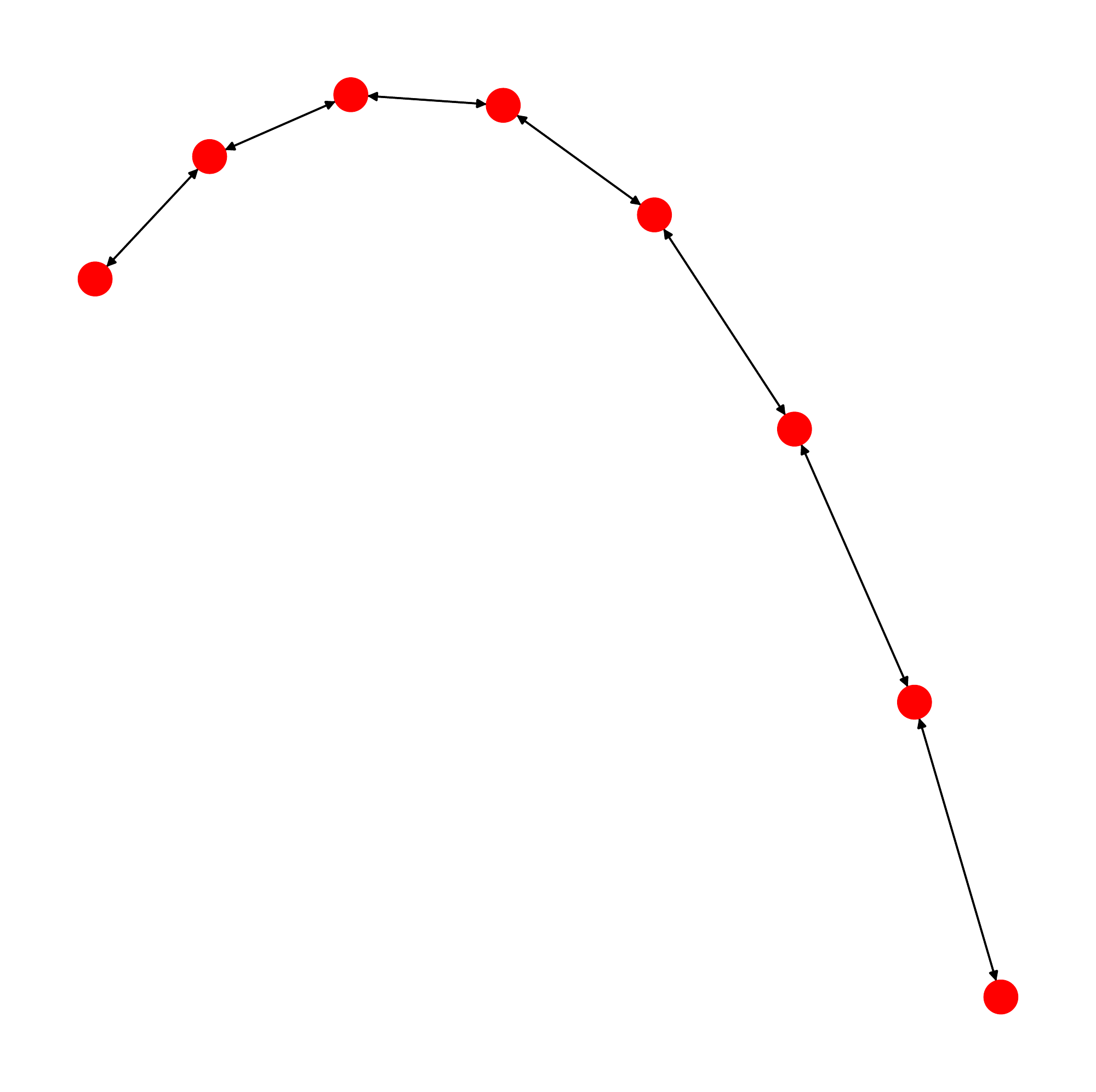}
	}
	\subfigure[{\small Level 2}]{
		\includegraphics[scale=0.08]{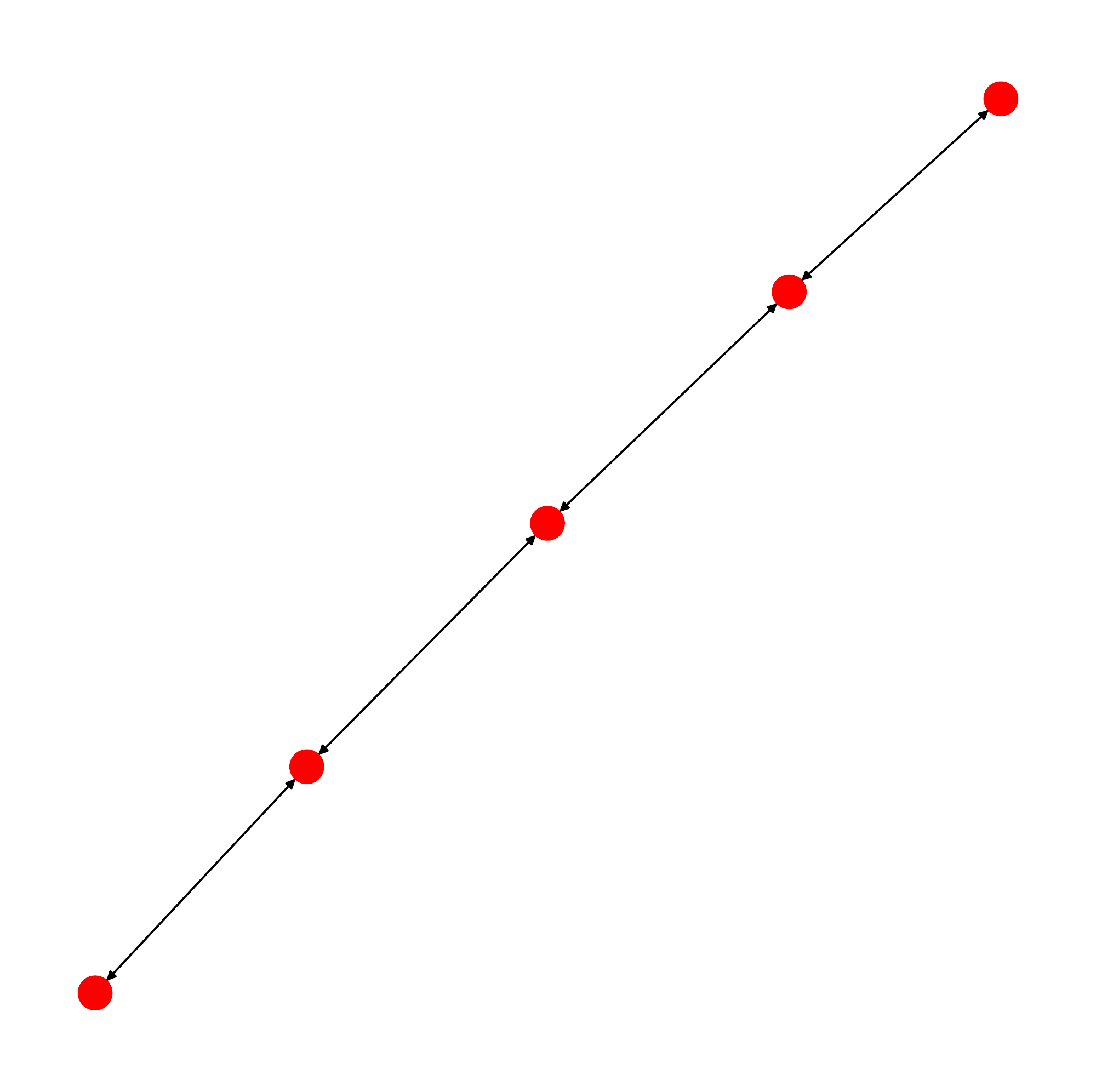}
	}
	\subfigure[{\small Level 3}]{
		\includegraphics[scale=0.08]{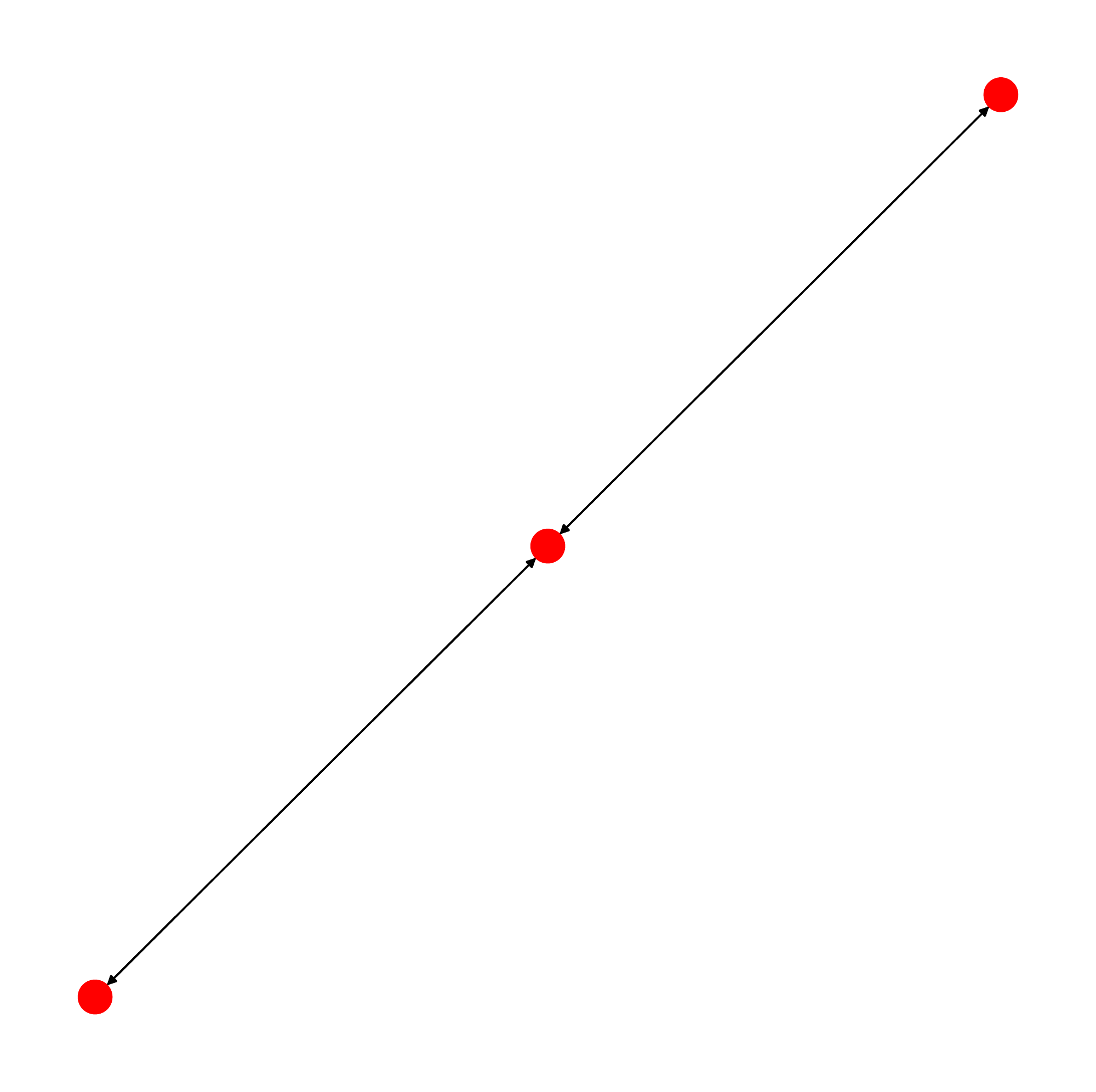}
	}
	\subfigure[{\small Input}]{
		\includegraphics[scale=0.08]{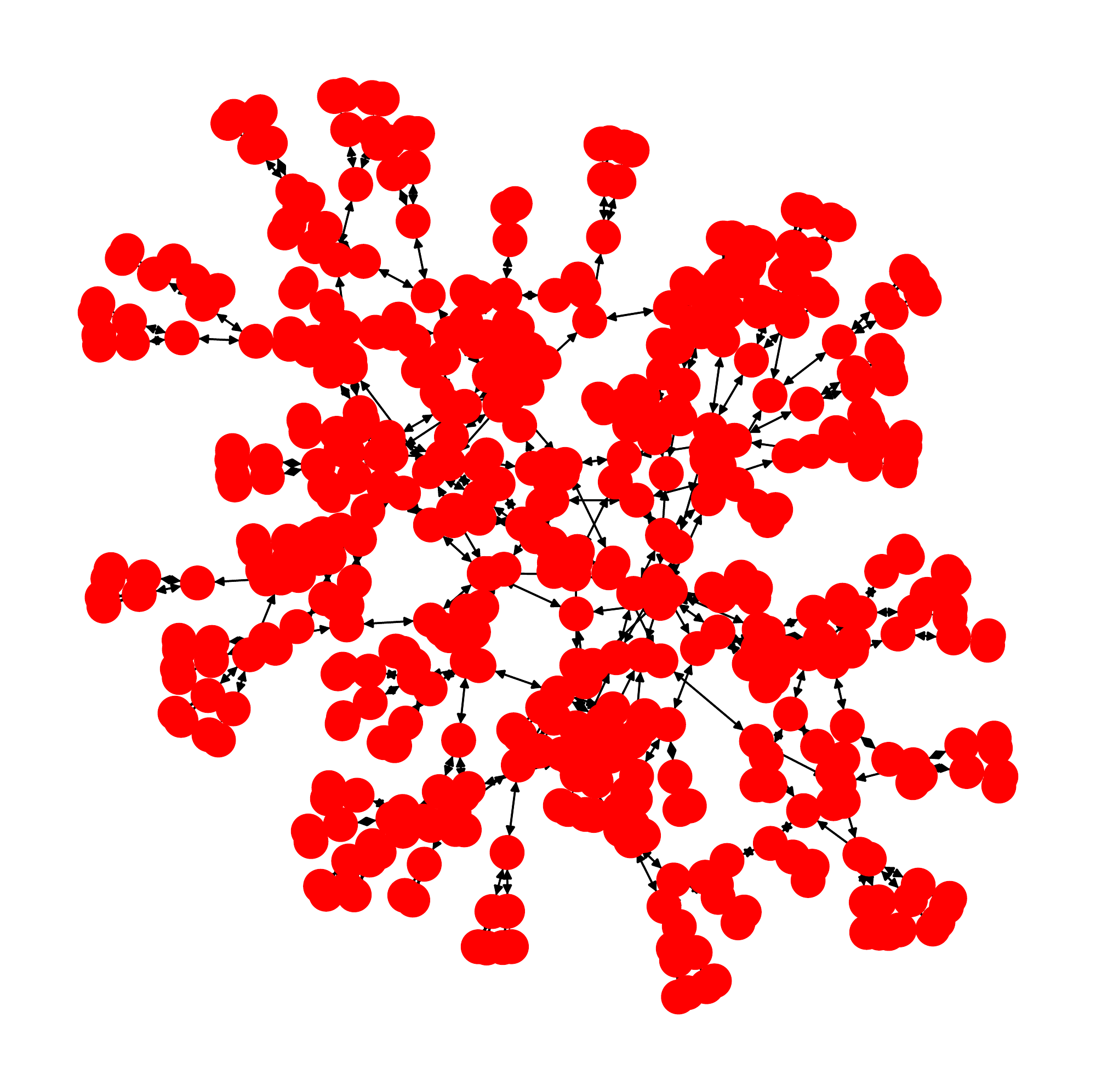}
	}
	\subfigure[{\small Level 1}]{
		\includegraphics[scale=0.08]{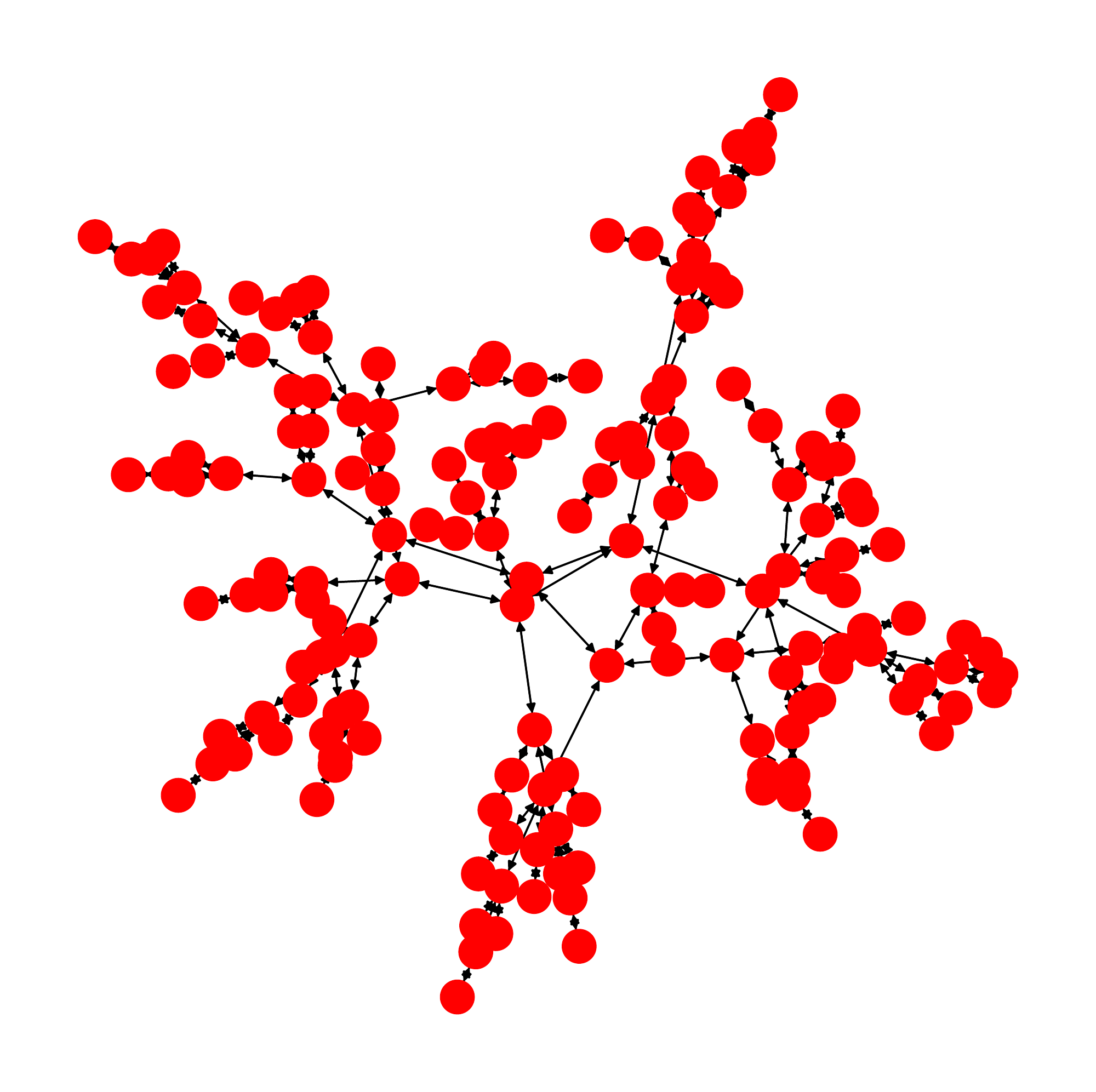}
	}
	\subfigure[{\small Level 2}]{
		\includegraphics[scale=0.08]{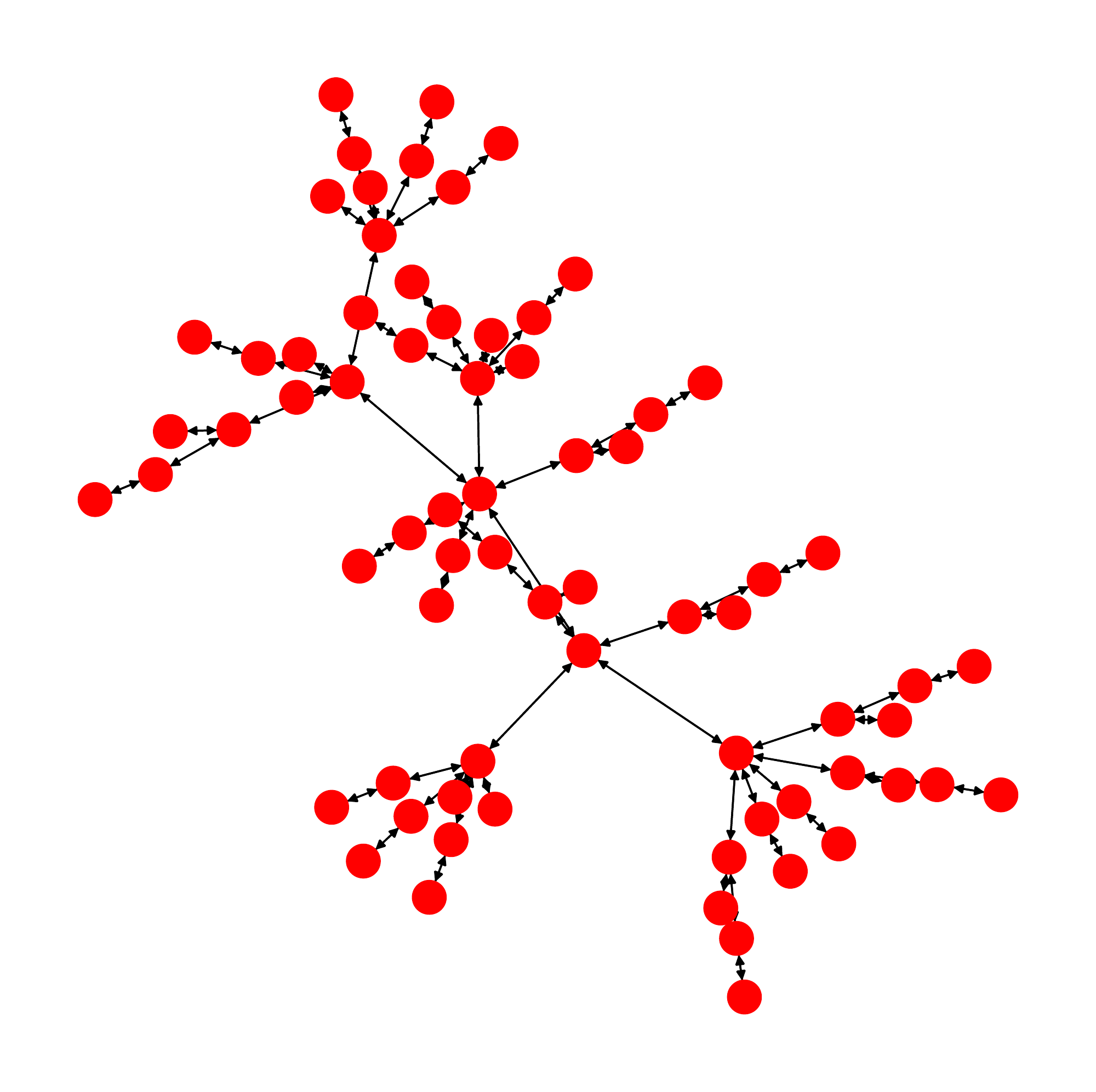}
	}
	\subfigure[{\small Level 3}]{
		\includegraphics[scale=0.08]{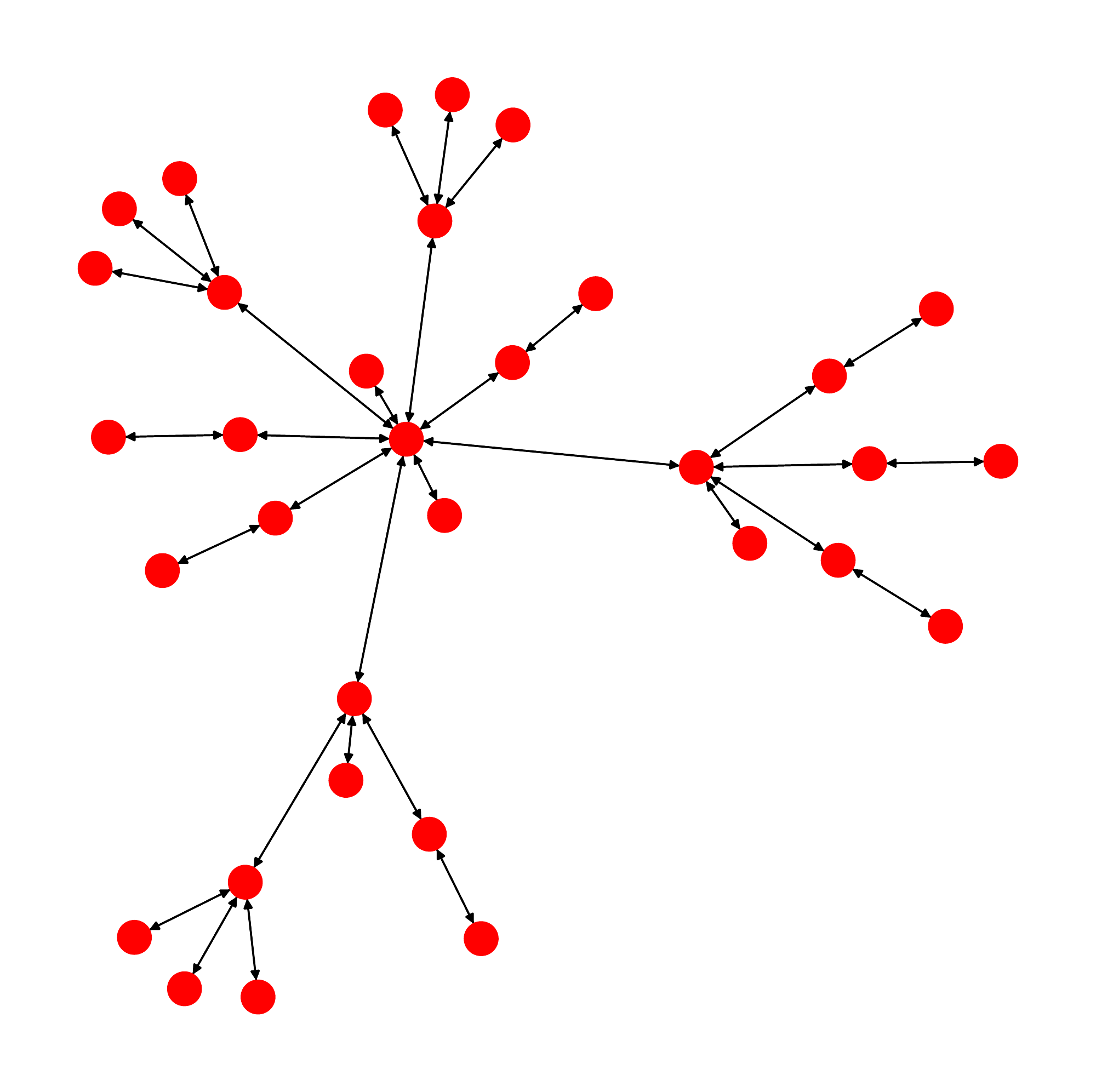}
	}
	\subfigure[{\small Input}]{
		\includegraphics[scale=0.08]{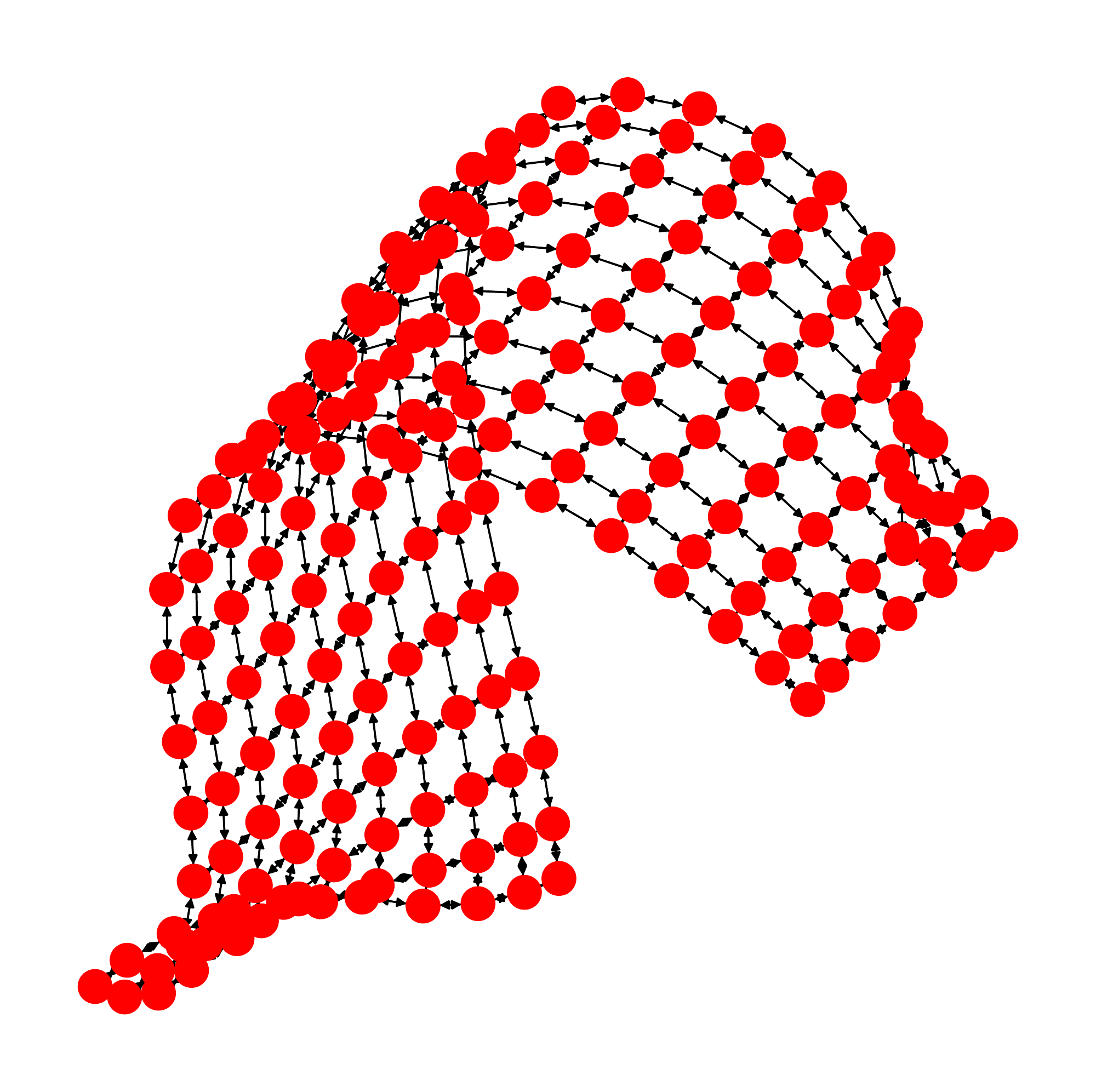}
	}
	\subfigure[{\small Level 1}]{
		\includegraphics[scale=0.08]{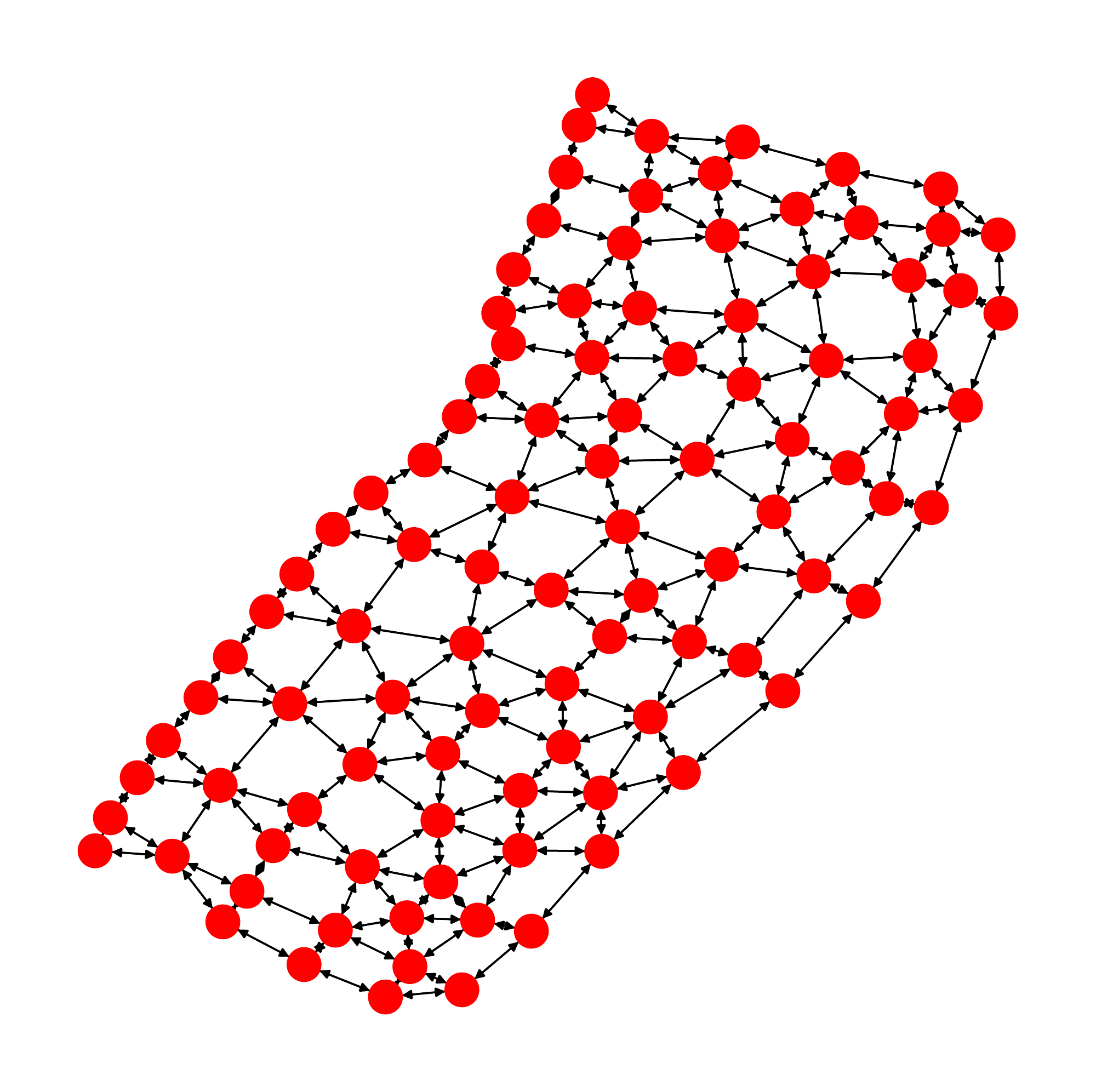}
	}
	\subfigure[{\small Level 2}]{
		\includegraphics[scale=0.08]{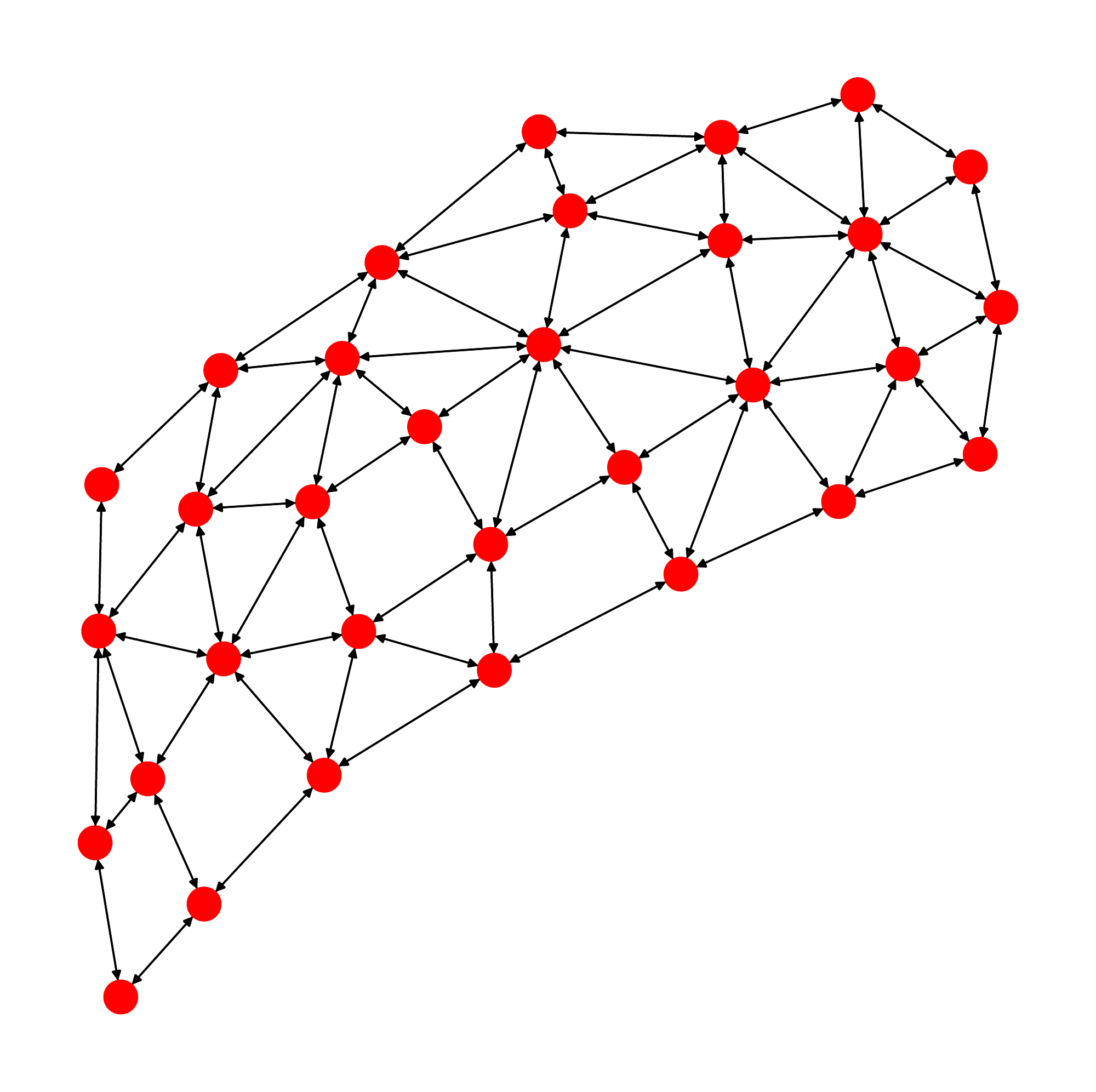}
	}
	\subfigure[{\small Level 3}]{
		\includegraphics[scale=0.08]{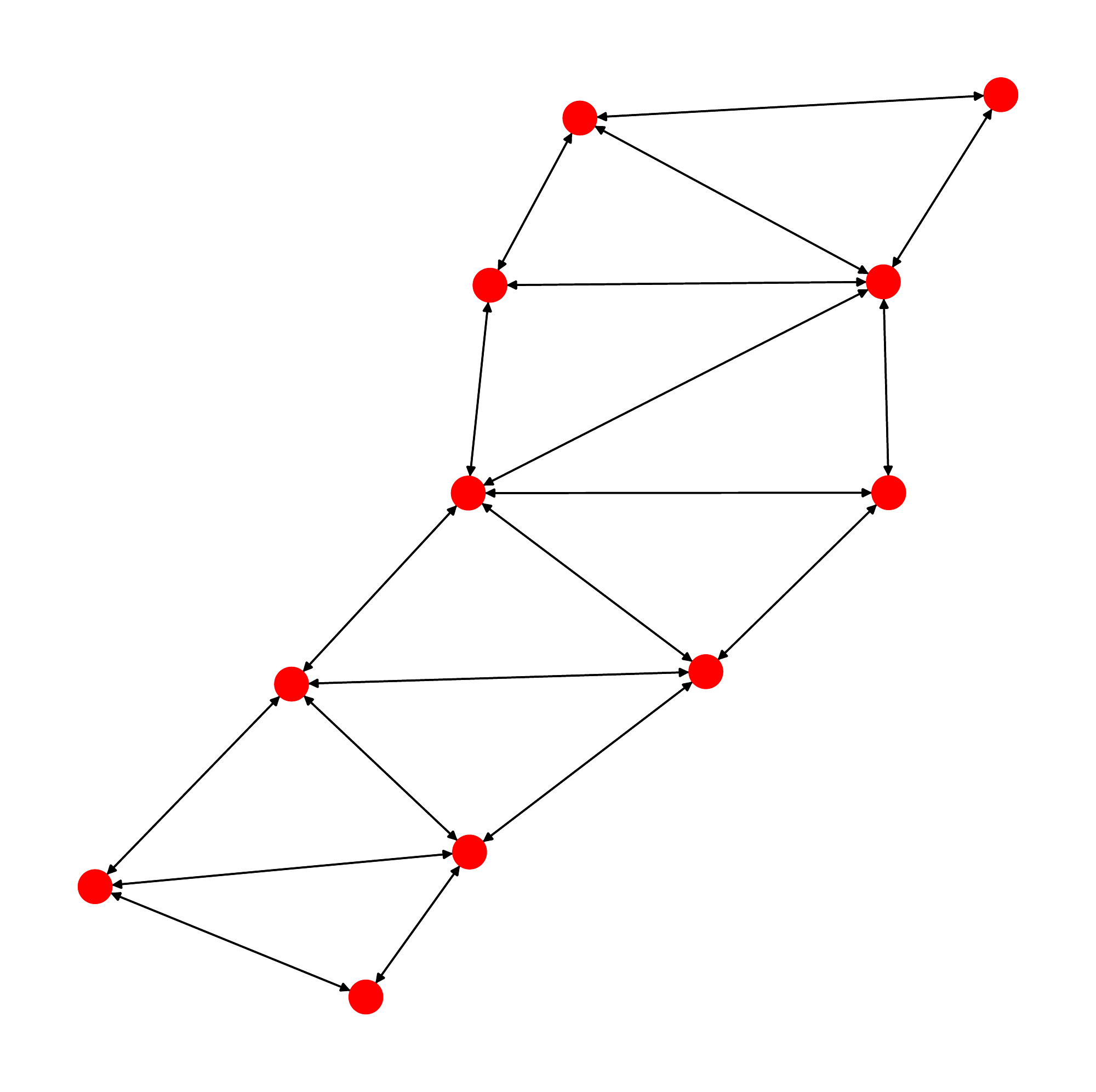}
	}
	\caption{{\small Examples of hierarchical sampling.}}\label{figure5}
\end{figure}

As shown in Fig.\ref{figure5}, we present three networks, Fig.\ref{figure5}(a), Fig.\ref{figure5}(e), and Fig.\ref{figure5}(i), as the intuitive examples to illustrate how hierarchical sampling works. 

The network in Fig.\ref{figure5}(a) contains two dense communities which are merged into a new node in the following compressed networks respectively. Therefore in the compressed networks, the local topological information of original networks are preserved by considering densely connected nodes in the same community as a whole. Meanwhile, the compressed networks, Fig.\ref{figure5}(b), Fig.\ref{figure5}(c), and Fig.\ref{figure5}(d), reveal the hierarchical topological information of the input network. For a balanced tree network and a grid network as shown in Fig.\ref{figure5}(e) and Fig.\ref{figure5}(i), their hierarchical topologies can be revealed by their compressed networks as well. The results of Fig.\ref{figure5} show that the network compressing strategy of HSRL works well on different types of networks.

\subsection{Link Prediction}
We conduct link prediction tasks to evaluate the performance of our method on five real-world datasets. Specifically, we prediction the link between nodes based on the cosine similarity of their embeddings. The evaluation metric used in this task is AUC. Higher AUC indicates the better performance of NRL methods.

We randomly split the edges of a network into 80\% edges as the training set and the left 20\% as the testing set. Each experiment is independently implemented for 20 times and the average performances on the testing set are reported in Table \ref{table2}.

We summarize the observations from Table.\ref{table2} as following:
\begin{itemize}
	\item HSRL significantly outperforms all baselines on all datasets. For the small and sparse network movielens, the improvements of HSRL(DW), HSRL(N2V), and HSRL(LINE) are 3.6\%, 2.5\%, and 16.6\% respectively. For the dense network MIT, HSRL(DW), HSRL(N2V), and HSRL(LINE) outperform the baselines by 2.9\%, 8.5\%, and 0.5\%. For three large networks, DBLP, Yelp, and Douban, the improvement of HSRL is striking: the improvements of HSRL(DW), HSRL(N2V), and HSRL(LINE) are 6.3\%, 19.9\%, 3.5\% for DBLP, 6.5\%, 16.8\%, 5.9\% for Yelp, and 18.4\%, 30.5\%, 17.5\% for Douban. 
	
	\item The results of HARP on movielens, DBLP, Yelp, and Douban are worse than the original NRL methods. Moreover, the performance of HARP(LINE) is drastically worse than LINE. It only works better than DeepWalk, node2vec on MIT which is a small and dense network. The compressed networks generated by HARP on a network could not reveal its global topologies. Hence, using node embeddings of compressed networks as initialization could mislead the NRL methods to a bad local minimum. Such an issue could occur especially when the input network is large-scale and the objective function of the NRL method is highly non-convex, e.g., LINE.
	
	\item The improvements of HSRL on DBLP, Yelp, and Duban are larger than that on Movielens and MIT. It demonstrates that HSRL works much better than baselines on large-scale networks. 
\end{itemize}
\begin{table}
    \caption{{\small AUC of link prediction.}}\label{table2}
    \begin{tabular}{cccccc}
        \toprule
         \textbf{Algorithm} & \multicolumn{5}{c}{\textbf{Dataset}} \\
          & Movielens & DBLP & MIT & Yelp & Douban \\
         \midrule
         DeepWalk & 0.847 & 0.794 & 0.899 & 0.842 & 0.687 \\
         HARP(DW) & 0.817 & 0.659 & 0.902 & 0.743 & 0.559 \\
         HSRL(DW) & \textbf{0.879}$^\dag$ & \textbf{0.847}$^\dag$ & \textbf{0.926}$^\dag$ & \textbf{0.901}$^\dag$ & \textbf{0.842}$^\dag$ \\
         Gain of HSRL(\%) & \textbf{3.6} & \textbf{6.3} & \textbf{2.9} & \textbf{6.5} & \textbf{18.4} \\
         \midrule
         node2vec & 0.843 & 0.673 & 0.843 & 0.742 & 0.569 \\
         HARP(N2V) & 0.828 & 0.647 & 0.879 & 0.708 & 0.552 \\
         HSRL(N2V) & \textbf{0.865}$^\dag$ & \textbf{0.840}$^\dag$ & \textbf{0.921}$^\dag$ & \textbf{0.892}$^\dag$ & \textbf{0.819}$^\dag$ \\
         Gain of HSRL(\%) & \textbf{2.5} & \textbf{19.9} & \textbf{8.5} & \textbf{16.8} & \textbf{30.5} \\
         \midrule
         LINE & 0.613 & 0.641 & 0.814 & 0.752 & 0.624 \\
         HARP(LINE) & 0.220 & 0.387 & 0.702 & 0.306 & 0.399 \\
         HSRL(LINE) & \textbf{0.735}$^\dag$ & \textbf{0.664}$^\dag$ & \textbf{0.819} & \textbf{0.799}$^\dag$ & \textbf{0.756}$^\dag$ \\
         Gain of HSRL(\%) & \textbf{16.6} & \textbf{3.5} & \textbf{0.5} & \textbf{5.9} & \textbf{17.5} \\
         \bottomrule
    \end{tabular}
	\smallskip
	
	{\small $^\dag$ denotes the performance of HSRL is significantly better than the other peers according to the Wilcoxon’s rank sum test at a 0.05 significance level.}
\end{table}

\subsection{Parameter Sensitivity Analysis}
We conduct link prediction task on DBLP to study the parameter sensitivity of HSRL. Without loss of generality, we used DeepWalk to learn node embeddings for each compressed network. Fig.\ref{figure6} shows using 80\% edges as training set and the left as testing set, the link prediction performance (AUC) as a function of one of the chosen parameters when fixing the others.

When fixing the largest number of compressed level to 3, Fig.\ref{figure6}(a) shows the AUC of link prediction drastically improves as the number of embedding dimension $d$ increases and finally becomes stable when $d$ is larger than 32. When $d$ is small, it is inadequate to embody rich information of networks. However, when $d$ is large enough to embody all original network information, increasing $d$ will not improve the performance of link prediction.

Fig.\ref{figure6}(b) shows that the impact of the largest number of network compressing level $K$ on the performance of link prediction by fixing the representation size $d$ to 64. As we increase $K$, the AUC of link prediction drastically improves. It demonstrates that the hierarchical topologies help to capture the potential relationship between nodes (even they are far away from each other) in a network. When $K$ is larger than 3, the performance of link prediction becomes stable. It is reasonable since the DBLP network could not be compressed further after level 3 as shown in Fig.\ref{figure4}(b).

\begin{figure}
	\centering
	\subfigure[{representation size $d$}]{
		\includegraphics[scale=0.19]{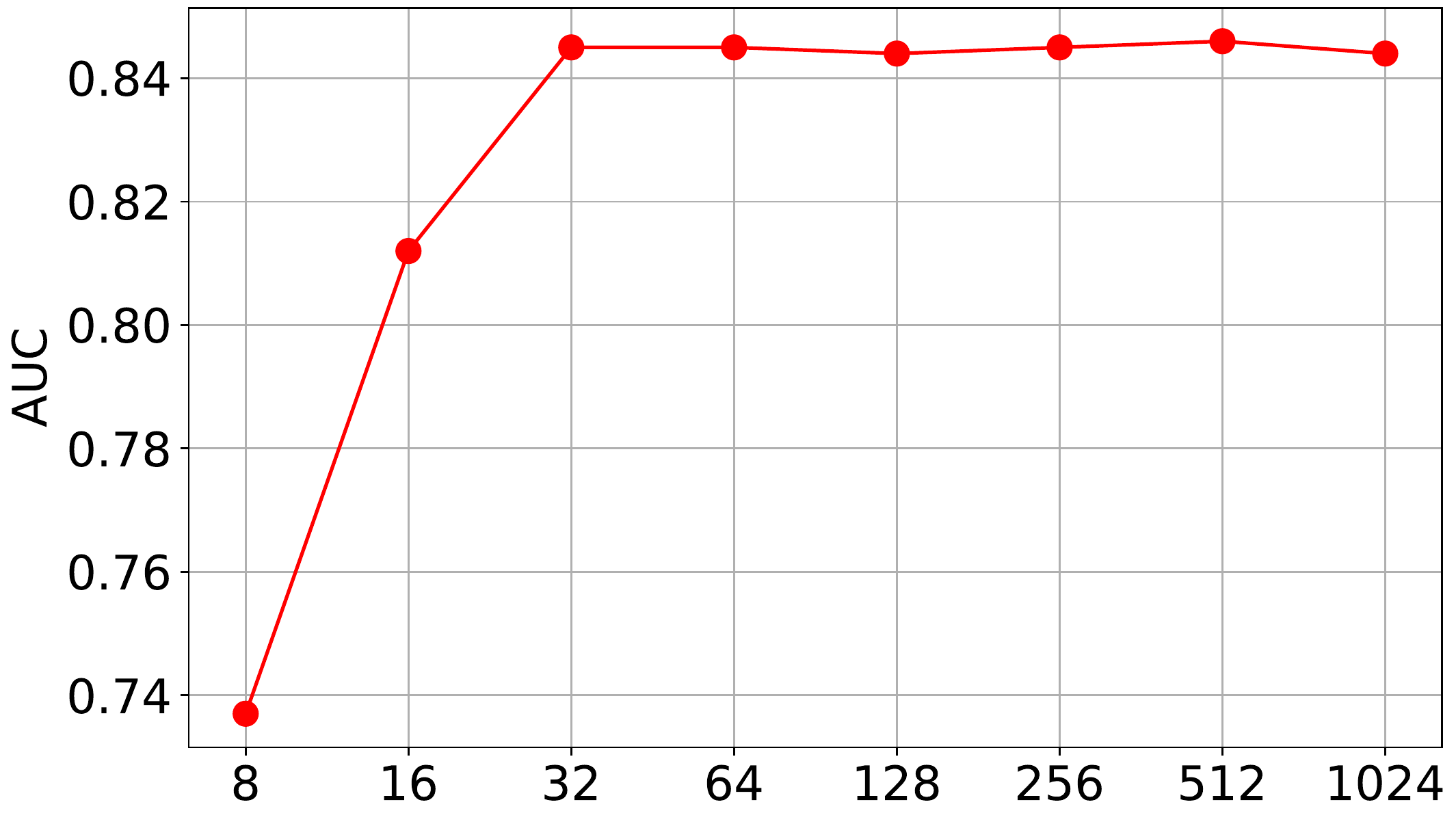}
	}
	\subfigure[network compressing level $K$]{
		\includegraphics[scale=0.19]{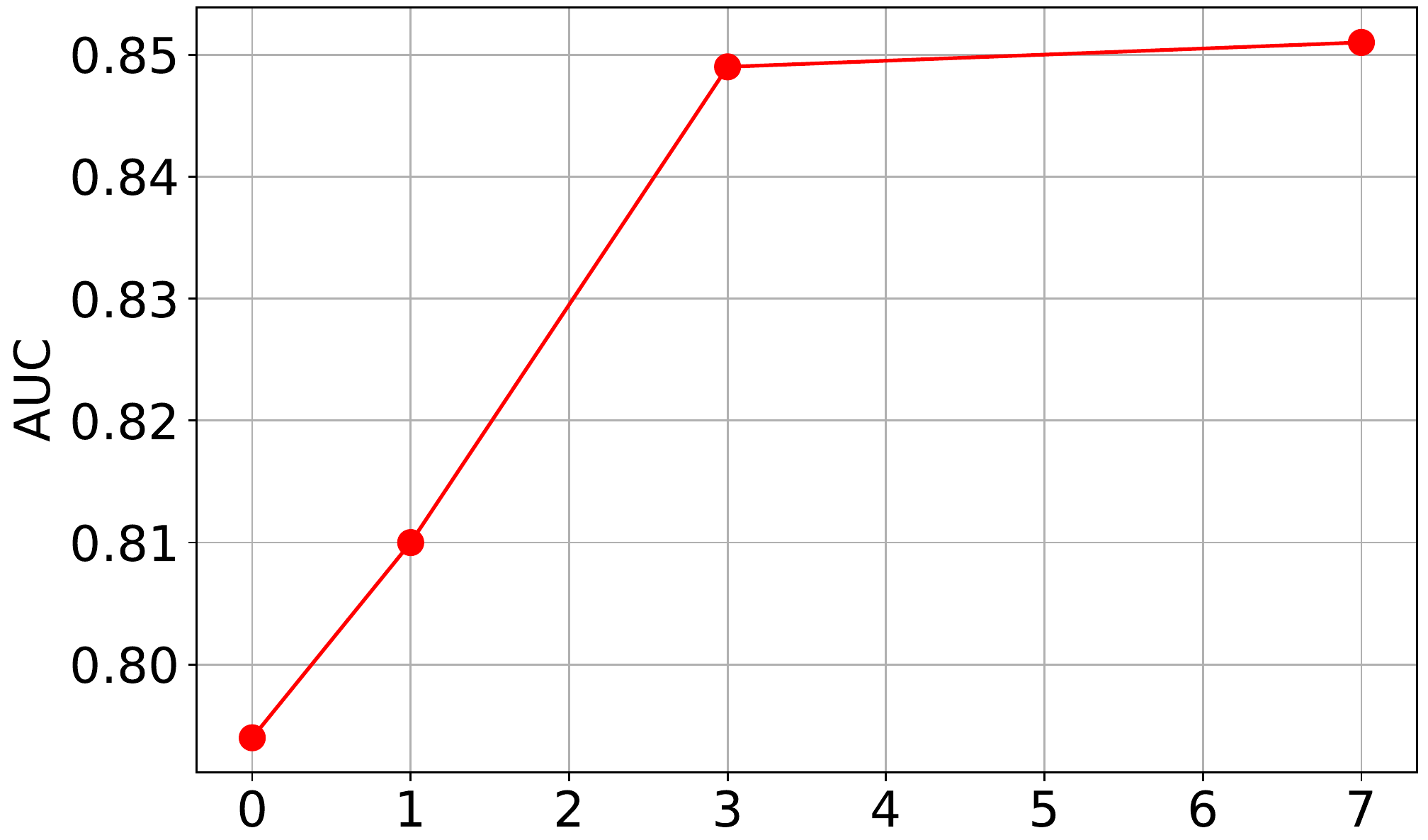}
	}
	\caption{{\small Parameter sensitivity on link prediction.}}\label{figure6}
\end{figure}

\subsection{Running Time}
Fig.\ref{figure6} shows the actual running time of all NRL methods on five testing networks. All experiments are conducted on a single machine with 32GB memory, 16 CPU cores at 3.2 GHZ. The results show that the actual running time of HSRL is at most three times higher than others. The running time of HSRL is linear to the corresponding baselines as the input networks growing. Moreover, the running time of HSRL can be reduced by parallelizing the training processes on all compressed networks.
\begin{figure}
	\centering
	\includegraphics[scale=0.38]{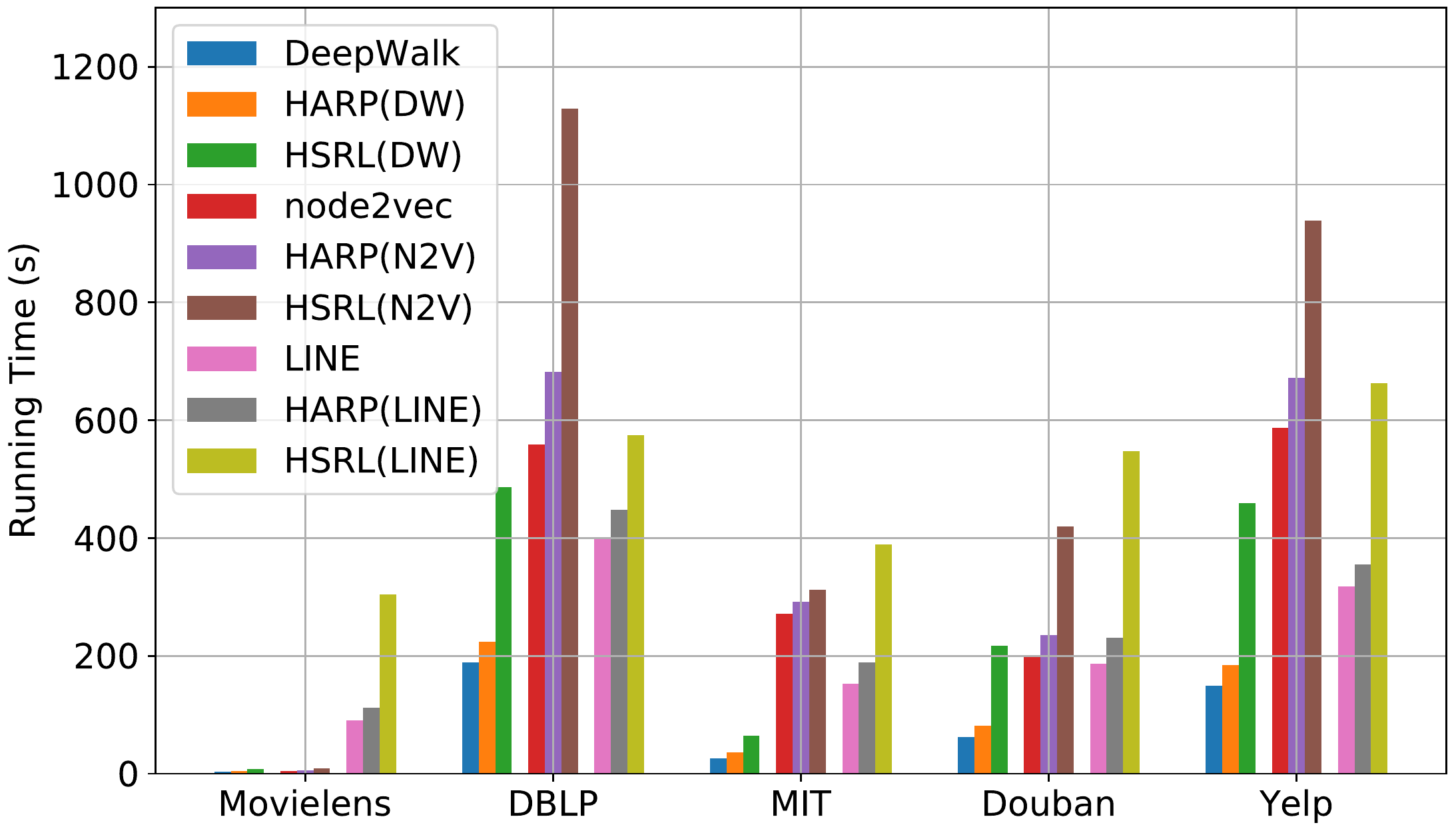}
	\caption{Running time}
	\label{figure7}
\end{figure}

\section{Conclusions}
Most conventional NRL methods aim to preserve the local topological information of a network but overlook their global topology. Recently, HARP was proposed to preserve both local and global topological information. However, it could easily get stuck at a bad local minimum due to its poor network compressing schemes. In this paper, we propose a new NRL framework, HSRL, to tackle these issues. Specifically, HSRL employs a community-awareness network compressing scheme to obtain a series of smaller networks based on an input network and conducts a NRL method to learn node embeddings for each compressed network. Finally, the node embeddings of the original network can be obtained by concatenating all node embeddings of compressed networks. Empirical studies on link prediction on various real-world networks demonstrate HSRL significantly outperforms the state-of-the-art algorithms.

Our future work includes combining HSRL with deep learning-based methods, such as DNGR\cite{cao2016deep}, SDNE\cite{wang2016structural}, and GCN\cite{kipf2016semi}. It is also very interesting to extend HSRL to learn node embeddings of more complex networks which may be more common in real-world applications, e.g., heterogeneous networks, attributed networks, and dynamic networks. 

\bibliographystyle{IEEEtran} 
\bibliography{reference} 
\end{document}